\documentclass[12pt,a4paper]{article}
\usepackage[utf8]{inputenc}
\usepackage{graphicx}
\usepackage{amsmath}
\usepackage{bm}
\usepackage{amsthm}
\usepackage{amssymb}
\usepackage{textcomp}
\usepackage{siunitx}
\usepackage{subfigure}
\usepackage{titling}
\usepackage{xcolor}
\usepackage{afterpage}
\usepackage{algorithm} 
\usepackage{algpseudocode} 
\usepackage{multirow}
\usepackage{hyperref}
\hypersetup{
    colorlinks=true,
    linkcolor=blue,
    citecolor = blue,
    filecolor=blue,      
    urlcolor=blue,
    pdftitle={Overleaf Example},
    pdfpagemode=FullScreen,
    }
\usepackage{dsfont}
\usepackage{mathtools}
\usepackage[font=small,labelfont=bf]{caption} % Required for specifying captions to tables and figures
\usepackage{booktabs}
\usepackage{geometry}

\usepackage[authoryear]{natbib}
\usepackage{authblk}

\newtheorem{proposition}{Proposition}%[section]

\def\be{\begin{equation}}
\def\ee{\end{equation}}
\def\bea{\begin{eqnarray}}
\def\eea{\end{eqnarray}}

\def\X{\bm{X}}

\newcommand{\nocontentsline}[3]{}
\newcommand{\tocless}[2]{\bgroup\let\addcontentsline=\nocontentsline#1{#2}\egroup}

\DeclareMathSymbol{\leqslant}{\mathalpha}{AMSa}{"36} % nicer `smaller or equal'
\DeclareMathSymbol{\geqslant}{\mathalpha}{AMSa}{"3E} % nicer `larger or equal'
\DeclareMathSymbol{\eset}{\mathalpha}{AMSb}{"3F}     % nicer `emptyset'
                   % redef. of < or =
\renewcommand{\geq}{\;\geqslant\;}                   % redef. of > or =

\title{The Rise and Fall of Ideas' Popularity\thanks{We thank Matteo Bizzarri, Fabrizio Panebianco, Matteo Tanzi and Filippo Gusella for their helpful comments.
Paolo Pin acknowledges
funding from the Italian Ministry of Education Progetti di Rilevante Interesse
Nazionale (PRIN) grants 2017ELHNNJ, 2022389MRW and P20228SXNF. Piero Mazzarisi acknowledges funding from the Italian Ministry of University and Research under the
PRIN project {\it Realized Random Graphs: A New Econometric Methodology for the
Inference of Dynamic Networks} (grant agreement n. 2022MRSYB7). Alessio Muscillo acknowledges funding the University of Siena (F-New Frontiers 2023 grant) and from Universitas Mercatorum (FIN-RIC 2024 grant). 
The 
%\href{https://github.com/alessiomuscillo/modeling_waves/blob/main/SM\%20Mazzarisi\%2C\%20Muscillo\%2C\%20Pacati\%2C\%20Pin\%202024.pdf}{supplementary material} and 
\href{https://github.com/alessiomuscillo/modeling_waves/blob/main/Code_for_Mazzisi_et_al_2024.ipynb}{code} used for the analysis is  \href{https://github.com/alessiomuscillo/modeling_waves}{available on GitHub}.
}}
\author[s]{Piero Mazzarisi}
\author[m]{Alessio Muscillo}
\author[s]{Claudio Pacati}
\author[s,b]{Paolo Pin}

\affil[s]{Department of Economics and Statistics, Universit\`a di Siena, Italy} 
\affil[m]{Department of Economics, Statistics and Business, Universitas Mercatorum, Italy}
\affil[b]{BIDSA, Universit\`a Bocconi, Milan, Italy}

\date{November 2024}

\begin{document}
\maketitle

\begin{abstract}
In the dynamic landscape of contemporary society, the popularity of ideas, opinions, and interests fluctuates rapidly. Traditional dynamical models in social sciences often fail to capture this inherent volatility, attributing changes to exogenous shocks rather than intrinsic features of the system. This paper introduces a novel, tractable model that simulates the natural rise and fall of ideas' popularity, offering a more accurate representation of real-world dynamics. Building upon the SIRS (Susceptible, Infectious, Recovered, Susceptible) epidemiological model, we incorporate a feedback mechanism that allows the recovery rate to vary dynamically based on the current state of the system. This modification reflects the cyclical nature of idea adoption and abandonment, driven by social saturation and renewed interest. Our model successfully captures the rapid and recurrent shifts in popularity, providing valuable insights into the mechanisms behind these fluctuations. This approach offers a robust framework for studying the diffusion dynamics of popular ideas, with potential applications across various fields such as marketing, technology adoption, and political movements.
\end{abstract}

{\bf Keywords}: Idea diffusion; SIRS model; Popularity cycles; Google Trends analysis.

\bigskip

{\bf Jel Classification codes}: {\bf C61} Dynamic Analysis -- {\bf D83}	Learning, Communication.

\section{Introduction}

Understanding the dynamics of idea diffusion and popularity cycles is crucial in numerous contexts, from marketing to political movements. Yet, existing models often fall short of capturing intrinsic fluctuations, relying on exogenous shocks to explain observed volatility. In this paper, we present a novel deterministic and tractable model that generates oscillating behavior—a feature not commonly addressed in traditional frameworks.

Our approach builds on the SIRS (Susceptible, Infectious, Recovered, Susceptible) model, a foundational framework in epidemiology used to describe the spread of infectious diseases. By reinterpreting the SIRS states in the context of idea diffusion, we model how individuals adopt, abandon, and potentially re-engage with ideas over time. Crucially, we extend the standard SIRS model with a feedback mechanism that dynamically adjusts the recovery rate based on the composition of the population. This simple yet powerful modification disrupts steady-state convergence, resulting in stable limit cycles that reflect real-world scenarios where the popularity of opinions and ideas oscillates over time.

Unlike traditional applications of the SIRS model, which focus on disease transmission in populations, our adaptation is uniquely suited to studying contexts where there is no “true state of the world,” and individual opinions evolve independently of objective realities. Applications of our framework extend to situations where opinions shape collective dynamics without directly influencing material outcomes, such as online trends, cultural phenomena, and speculative markets.
By focusing on the intrinsic feedback mechanisms driving cyclical behavior, our model offers a fresh perspective on the interplay between interest saturation and influencing enthusiasm. This approach complements and advances the current literature, providing a tool for understanding endogenous volatility in opinion dynamics which is simple and analytically tractable, yet robust.

The SIRS model, traditionally used in epidemiology to describe the spread of infectious diseases, can be adapted to study the diffusion of popular ideas within a population. Below is an interpretation of the Susceptible, Infectious, and Recovered states in the context of idea diffusion:

\begin{itemize}
    \item[(S):] \textbf{Susceptible} -- Individuals in this state are not currently aware of or influenced by the popular idea. However, they are open to adopting the idea if exposed to it. These individuals can be considered potential followers or adopters.
    \item[(I):] \textbf{Infectious} -- Individuals in this state have adopted the popular idea and are actively promoting it to others. They influence those in the susceptible state, encouraging them to adopt the idea. These individuals can be likened to enthusiastic supporters or advocates of the idea.
    \item[(R):] \textbf{Recovered} -- Individuals in this state were previously influenced by the idea but are no longer actively promoting it. They might have lost interest, become disillusioned, or moved on to other ideas. Over time, these individuals can return to the susceptible state, making them open to adopting the idea again or being influenced by new ideas.
    \item[(S):] \textbf{Susceptible again} -- After a period of being recovered, individuals return to the susceptible state. This reflects the cyclical nature of idea adoption, where individuals can re-engage with the same idea or become receptive to new ones over time.
\end{itemize}

The standard SIRS model is not ergodic and converges asymptotically to a steady state. However, in reality, we observe that the popularity of ideas and fads is often subject to significant fluctuations. For example, social media trends can spike rapidly, reaching wide adoption in a short period, only to diminish just as quickly as public interest wanes \citep{kwak2010twitter}. This phenomenon is not only observed in social media  but also in fashion \citep{aspers2013sociology}, technology adoption \citep{rogers2014diffusion}, political movements \citep{tufekci2014social}, and other areas where ideas compete for public attention. The dynamic nature of idea popularity underscores the need for models that can capture these rapid shifts and cyclical patterns. %{\bf [OTHER REFERENCES NEEDED?]}

So, contrary to the standard model, we study a case where the parameter governing the transition from the infected to the recovered state is not constant. Instead, it depends linearly on both the number of infected individuals (positively) and the number of susceptible individuals (negatively). This interpretation allows us to capture the dynamic and recurring nature of how ideas spread and fade within a population, accounting for the repeated cycles of interest and disinterest that characterize popular ideas.

Positive dependence on the number of infected individuals implies that the more people are actively promoting the idea (infected), the faster individuals lose interest in the idea and transition to the recovered state. This could be due to oversaturation or the idea becoming less novel and interesting as more people adopt and promote it. In other words, widespread adoption and promotion can lead to a quicker decline in enthusiasm and a faster rate of disillusionment. We call this tendency \emph{interest saturation} (see \citealp{jetten2014deviance} for a survey of the literature in social psychology that describe deviant behaviors with respect to conformism).

Negative dependence on the number of susceptible individuals suggests that the more people are open to adopting the idea (susceptible), the slower individuals currently promoting the idea (infected) transition to the recovered state. This could be interpreted as a social dynamics effect where the presence of many potential adopters makes current promoters increase their focus, as if they had an objective function that increases in the number of successful transmissions. This is also in line with the assumptions at the basis of the economic analysis of \emph{cultural transmission}, started by \cite{bisin2001economics}. We call this tendency \emph{influencing enthusiasm}.

Remarkably, this model shows steady states that are fluctuating and can represent situations where the popularity of ideas is volatile, with fast cycles of adoption and abandonment influenced by social saturation and the shifting focus of the population's attention.

The rest of this paper is structured as follows. In Section \ref{sec:literature}, we review the relevant literature on idea diffusion, highlighting the key contributions of continuous- and discrete-time models, as well as agent-based and heterogeneous agent frameworks, and situate our model within this context. Section \ref{sec:model} introduces our modified SIRS model, detailing the endogenous feedback mechanisms of interest saturation and influencing enthusiasm that drive the oscillatory dynamics. We conduct a rigorous stability and bifurcation analysis to demonstrate how these mechanisms lead to periodic cycles in the system's behavior. Section \ref{sec:data} applies the model to empirical data, using Google Trends as a proxy for idea popularity to validate the predicted cyclical patterns. Finally, Section \ref{sec:conclusion} concludes with a discussion of the broader implications of our findings and potential avenues for future research.

The appendices provide technical details and proofs that support our theoretical findings. Appendix \ref{app:proofs} includes the full derivation of the mathematical results discussed in the stability and bifurcation analysis. Appendix \ref{app:simulations} offers some numerical computation results to illustrate the model’s behavior across a broader range of parameter values. 
Appendix \ref{app:data} describes the implementation details of the empirical validation, including the data pre--processing methods and the metric used to compare model predictions with Google Trends data.

\section{Relation to the Literature}
\label{sec:literature}

The diffusion of ideas, opinions, and interests within populations has been widely examined through various modeling approaches, many of which are inspired by epidemiological frameworks, heterogeneous agent models, and agent-based frameworks. These models shed light on how beliefs evolve and how individuals’ interests in certain ideas rise and fall over time as they interact with one another. 
In the present section, we review prominent approaches related to idea diffusion, particularly focusing on continuous- and discrete-time frameworks, heterogeneous agent models, and agent-based models. This provides a foundation for understanding the evolution of beliefs and expectations within complex social and economic systems, highlighting how our model diverges by allowing belief and interest changes to emerge from random, goal-independent interactions.

\subsubsection*{Continuous-Time Models}

% Several studies have utilized continuous-time models to explore the dynamics of idea diffusion. Notably, \cite{young2006diffusion, young2009innovation} employed epidemiological approaches to model the spread of innovations and behaviors within social networks. These works laid the foundation for understanding how social influence and network structures impact the adoption of ideas over time. More recently, \cite{badr2021diffusion} and \cite{chen2020modeling} extended these ideas by specifically applying the SIRS (Susceptible, Infectious, Recovered, Susceptible) model to capture the cyclical nature of idea popularity. As we already discussed, the SIRS model, traditionally used to study the dynamics of infectious diseases, is well-suited to modeling phenomena where individuals can lose and later regain interest in an idea.

Continuous-time models have been extensively employed to study the diffusion of ideas and behaviors, often drawing inspiration from epidemiological frameworks. Early contributions by \cite{young2006diffusion, young2009innovation} established foundational insights into how social influence and network structures affect the adoption of innovations over time. These works provided a dynamic perspective on the spread of ideas within populations, emphasizing the role of continuous interactions among individuals.
More recently, models have incorporated specific mechanisms to account for the cyclical nature of idea popularity. For instance, \cite{badr2021diffusion} and \cite{chen2020modeling} adapted a similar framework from epidemiology to model the rise and fall of ideas. These studies are based on numerical computation of the trajectories and also illustrate how individuals can repeatedly transition between adopting, abandoning, and reconsidering ideas, capturing the inherent volatility of social phenomena.

Our work introduces a modified SIRS framework that embeds a dynamic feedback mechanism. Unlike traditional continuous-time models, where transition rates between states are static, we allow the recovery rate to vary endogenously with the system’s state. This adjustment captures the interplay between social saturation—where widespread adoption accelerates disinterest—and influencing enthusiasm, where the presence of many potential adopters prolongs the efforts of active promoters. By introducing this feedback loop, our model departs from steady-state predictions and generates oscillatory dynamics, reflecting real-world patterns of idea diffusion.

This advancement underscores the versatility of continuous-time models in representing complex social dynamics. By incorporating endogenous mechanisms, we extend their applicability to contexts where rapid and recurrent fluctuations dominate, offering a robust framework for analyzing opinion dynamics in diverse fields, from cultural trends to digital marketing.

\subsubsection*{Discrete-Time Models}

While continuous-time models provide valuable insights, most models in this domain operate in discrete time, reflecting the periodic nature of data collection and decision-making in real-world scenarios. These models often involve a discrete number of agents and typically converge to a steady-state equilibrium, where the spread of the idea stabilizes. However, several studies highlight exceptions to this pattern. \cite{acemouglu2013opinion} demonstrated that the presence of stubborn agents—individuals resistant to changing their opinions—can prevent convergence, leading to persistent fluctuations in opinion dynamics. Similarly, the recent work by \cite{danenberg2024endogenous} introduces a stochastic component into the model, reflecting the random nature of attention and information spread in digital environments. This addition helps to capture the unpredictable nature of idea diffusion in the context of modern social media platforms.

%\subsubsection*{Conceptual Parallels}

A concept akin to the ``recovered" or ``uninterested" individuals in the SIRS model is present in the influence campaigns studied by \cite{sadler2023influence}. In this model, individuals who lose interest in a particular idea may still be susceptible to future influence, particularly as new or modified ideas emerge. This cyclical susceptibility is a key feature in understanding how ideas wax and wane in popularity, similar to the patterns observed in our extended SIRS framework.

Other simulation-based studies provide insights into the dynamics of idea diffusion in more complex settings. \cite{hethcote1981nonlinear} explored nonlinear oscillations in epidemic models, demonstrating how small changes in parameters can lead to significant fluctuations in outcomes. \cite{khalifi2022extending} extended a traditional  model in discrete time by allowing for gradual waning of immunity, which parallels the gradual loss of interest in an idea. These studies, though focused primarily on disease dynamics, offer valuable analogies for understanding the nonlinear and often unpredictable patterns of idea spread in social systems.

\subsubsection*{Agent-Based and Heterogeneous Agent Models for Belief Dynamics}

Agent-based models (ABMs) and heterogeneous agent models (HAMs) are powerful tools for studying belief dynamics in populations, offering insights into how individual-level interactions shape aggregate outcomes. These frameworks typically rely on structured rules or optimization-driven behaviors, often assuming the presence of an underlying ``true state of the world'' to which agents respond. While this assumption is useful in contexts such as financial markets or opinion formation under clear external signals, it becomes less applicable in scenarios where beliefs and interests evolve independently of an objective reality.

Our approach diverges from this literature by removing the need for a true state of the world. Instead, we model belief and interest dynamics as emerging purely from decentralized and random interactions among agents, without anchoring to external truths. This abstraction allows us to focus on endogenous processes, such as the cyclical patterns of interest saturation and influencing enthusiasm, which drive opinion fluctuations and the rise and fall of ideas. By doing so, our model provides a distinct lens to study belief dynamics in settings where subjective perceptions and social interactions dominate, offering a novel contribution to the literature.

Within the ABM framework, HAMs play a specific role by categorizing agents based on different expectations and strategies, particularly in financial markets. Foundational works by \cite{day1990bulls,chiarella1992dynamics,brock1998heterogeneous} and \cite{lux1995herd} have shown how diverse beliefs can drive complex market dynamics, such as price fluctuations and trend formation (see also \citealt{lebaron2008modeling,hommes2021behavioral}). Typically, HAMs assume agents have distinct objectives or payoff functions, such as aligning with market fundamentals or following trends.

Recent advances in ABM validation have introduced machine learning and Bayesian estimation techniques to address the challenges of calibrating models with high agent heterogeneity. For example, \cite{lamperti2018agent} use machine learning surrogates for calibration, while \cite{gatti2020rising} propose Bayesian methods for parameter estimation and forecasting. These methods enhance the robustness of ABMs by improving the reliability of agent-level decision processes, particularly in macroeconomic settings \citep{dosi2019more,monti2023learning}.

An important aspect of these models is the role of endogenous fluctuations, where cycles in belief dynamics emerge naturally from the interactions between agents. This approach has been advanced in a recent work by \cite{gusella2024endogenous}, who use a state-space framework to capture the endogenous cycles generated by heterogeneous expectations. This concept of endogenous fluctuations is fundamental to our model as well.

In contrast to traditional HAMs and ABMs, our model abstracts from specific payoff functions or objective-driven behaviors. Rather than categorizing agents by strategy or using structured decision rules, we capture belief evolution through random, decentralized exchanges, allowing interest levels to fluctuate organically. This approach models the spontaneous evolution of beliefs without relying on goal-oriented updates or optimization processes, providing a unique perspective within the ABM and HAM literature.

Although agent-based and heterogeneous agent models have provided valuable insights into belief dynamics, none of them share a technical similarity with our approach. These models typically employ agent-level simulations, optimization-driven strategies, or complex decision-making processes to explain aggregate dynamics. In contrast, our model is a simple system of differential equations, offering a tractable and analytically solvable framework for studying belief evolution. By focusing on continuous and deterministic dynamics rather than relying on agent-specific rules or stochastic simulations, our model provides a distinct methodological perspective that has not been explored in this literature. This technical simplicity, combined with its ability to capture cyclical patterns, sets our work apart from the agent-based and heterogeneous agent modeling traditions.

\section{The model}
\label{sec:model}

The SIRS model is represented by a system of ordinary differential equations (ODEs) that describe the rates of change for the three states over time:

\begin{itemize}
\item[(S):] {\bf Susceptible} - The rate of change in the susceptible population ($dS/dt$) is positively influenced by the rate at which individuals become {\it open} to adopt the idea and become susceptible ($\xi R$ with $\xi>0$), and is decreased by the rate at which susceptible individuals {\it adopt} actually it ($-\beta SI$ with $\beta>0$).
\begin{equation}\label{eq:S}
\frac{dS}{dt} = -\beta SI + \xi R,
\end{equation}
\item[(I):] {\bf Infectious} - The rate of change in the population ``infected'' by the idea ($dI/dt$) is increased by the rate at which susceptible individuals adopt the idea ($+\beta SI$) and decreased by the rate at which individuals that previously adopted the idea change their mind ($-\gamma I$ with $\gamma> 0$).
\begin{equation}\label{eq:I}
\frac{dI}{dt} = \beta SI - \gamma I,
\end{equation}
\item[(R):] {\bf Recovered} - The rate of change in the population that changed their mind $dR/dt$ is determined by the rate at which individuals infected by the idea change their mind ($\gamma I$) and the rate at which individuals become open to adopt the idea again ($-\xi R$).
\begin{equation}\label{eq:R}
\frac{dR}{dt} = \gamma I - \xi R
\end{equation}
\end{itemize}
The set of Equations (\ref{eq:S}), (\ref{eq:I}), and (\ref{eq:R}) define a dynamical system, which is characterized by an invariant of the dynamics, namely the total number of individuals $S+I+R=1$ (normalized to one for convenience), allowing to reduce by one the number of equations by constraining one dependent variable, e.g. $R=1-S-I$. In these equations, (i) $\beta$ is the transmission rate, representing the likelihood that the idea spreads from ``infected'' population to individuals open to change their mind; (ii) $\xi$ is the rate at which individuals who were closed to adopt the new idea becomes open to it; finally, (iii) $\gamma$ is the `recovery' rate at which individuals change their mind and abandon the idea previously adopted. The intuition on these transition rates from one state to the other mediated by constant parameters $\beta$, $\xi$, and $\gamma$ is crucial: the period of stay in a given state is the inverse of the rate. For example, all individuals in $I$ at time $t$ transition to $R$ after a period $1/\gamma$.

The standard SIRS model predicts a steady state for the long-run dynamics, see, e.g., \cite{anderson1991infectious}.  However, many real-world epidemiological examples exhibit dynamics that deviate significantly from the predictions of the SIRS model, displaying oscillations, mean reversion, and often periodic evolutions. Attempts to include such patterns in simple epidemic models in continuous time date back to 70's and proposed solutions revolve around a few key intuitions: (i) exogenous time-varying patterns, possibly seasonal, for the transmission rates as in \cite{hethcote1973asymptotic}; (ii) delay effects for the transition rates resulting in integro-differential equations as in \cite{cooke1976periodicity}; (iii) more complex interactions like a nonlinear incidence for the transmission rate ($\beta S^pI^q$) as in \cite{hethcote1989epidemiological}.

We model the coupled mechanism of {\it interest saturation} and {\it influencing enthusiasm} within the general context of the SIRS framework in a straightforward manner. Specifically, we make the recovery rate $\gamma$ a dependent variable, denoted as $\Gamma(t)$, and we model a linear dependence of $\log\Gamma(t)$ on the states $I(t)$ and $S(t)$ as\footnote{Modeling $\log\Gamma$ in the place of $\Gamma$ is convenient to ensure that the recovery rate remains positive.}
\begin{equation}\label{eq:G}
\frac{d\Gamma}{dt} = \Gamma(\alpha I-\delta S),
\end{equation}
with $\alpha,\delta> 0$. This approach describes a recovery transition process, where the transition from infected ($I$) to recovered ($R$) happens more quickly when the number of infected individuals is high (indicating that many infected individuals lose interest in a commonly adopted idea) and more slowly when the number of susceptible individuals is high (indicating that a few infected individuals retain the idea longer to promote its spread). In other words, the state $R$ can be considered as a reservoir that introduces a delay in the transition from $I$ to $S$, while the recovery transition process acts as a feedback mechanism, which favors the less represented compartment between $I$ and $S$ when the other one has increased its numbers.

We demonstrate that this straightforward feedback mechanism disrupts the steady-state equilibrium, breaking the stability of the fixed-point dynamics. In a restricted version of the model ($\alpha=\delta$), we prove this happens through a Hopf bifurcation, which leads to the emergence of stable limit cycles.
Furthermore, from a methodological standpoint, it is noteworthy that cycles can arise endogenously within one of the simplest systems of interactions. This occurs without the need for external drivers or complex delay effects, providing a very simple framework for studying periodic behavior in opinion dynamics.

\subsection{Stability analysis}
The model reads as the following system of ODEs
\begin{align}
\frac{dS}{dt} & = -\beta SI + \xi (1-S-I),\label{eq:SIRSfs}\\
\frac{dI}{dt} & = \beta SI - \Gamma I,\label{eq:SIRSfi}\\
\frac{d\Gamma}{dt} & = \Gamma(\alpha I-\delta S)\label{eq:SIRSfg},
\end{align}
where $\X\equiv(S,I,\Gamma)\in [0,1]^2\times \mathbb{R}_{>0}$ and $\beta,\xi,\alpha,\delta\in\mathbb{R}_{>0}$, with the flow operator defined by
\begin{equation}\label{eq:flow}
    \Phi(\X) = \begin{pmatrix}
        -\beta SI + \xi(1-S-I)\\
        \beta SI - \Gamma I \\
        \Gamma (\alpha I - \delta S)
    \end{pmatrix}.
\end{equation}
The solutions of $\Phi(\X)={\bm0}$ define the fixed points of the system. In the domain of the state vector $\X$, there exists only one solution\footnote{There are two further solutions $X^*_{10}=(1,0,0)$ and $X^*_{01}=(0,1,0)$ of $\Phi(X)=0$, but outside the domain (indeed at the boundary) because of $\Gamma^*_{10}=\Gamma^*_{01}=0$.}
\begin{equation}\label{eq:fp}
\X^*= \begin{cases}
    \displaystyle S^* = \frac{-(\delta+\alpha)+\sqrt{\left(\alpha +\delta\right)^2+4\alpha\delta\frac{\beta}{\xi}}}{2\delta\frac{\beta}{\xi}},\\
    \displaystyle I^* = \frac{\delta}{\alpha}S^*,\\
    \displaystyle \Gamma^* = \beta S^*.
    \end{cases}
\end{equation}
The stability of $\X^*$ can be studied in perturbation analysis via the Jacobian of the flow operator, see, e.g., \cite{guckenheimer2013nonlinear}.

\begin{proposition}
\label{unstable}
Solution $\X^*$ for ODEs~\eqref{eq:SIRSfs}--\eqref{eq:SIRSfg} is locally unstable if and only if:
\begin{align}
 \left\{
  \begin{aligned}
  \alpha& > \beta + \xi \\
  \delta& \geq   \frac{\alpha^2\xi}{(\alpha-\beta)(\alpha-\beta-\xi)}.
  \end{aligned}\label{eq:RH2}
 \right.
\end{align}.
\end{proposition}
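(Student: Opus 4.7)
The plan is to carry out a standard linear stability analysis at $\X^*$ and apply the Routh--Hurwitz criterion to the resulting characteristic polynomial. I would first compute the Jacobian of the flow $\Phi$ and evaluate it at the fixed point \eqref{eq:fp}, exploiting the defining relations $\Gamma^*=\beta S^*$ and $\alpha I^*=\delta S^*$ to simplify entries: these make the $(2,2)$ and $(3,3)$ entries of the Jacobian vanish, leaving a matrix with only a handful of nonzero entries. The characteristic polynomial is then a cubic $p(\lambda)=\lambda^{3}+a_{1}\lambda^{2}+a_{2}\lambda+a_{3}$, and a direct expansion yields
\begin{equation*}
a_{1}=\beta I^{*}+\xi,\qquad a_{2}=\beta I^{*}\bigl[(\alpha+\beta)S^{*}+\xi\bigr],\qquad a_{3}=\beta I^{*}S^{*}\bigl[\alpha(\beta I^{*}+\xi)+\delta(\beta S^{*}+\xi)\bigr],
\end{equation*}
all three strictly positive because $S^{*},I^{*},\Gamma^{*}>0$.

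Since $a_{1}>0$ and $a_{3}>0$ automatically, the Routh--Hurwitz condition for local asymptotic stability reduces to the single inequality $a_{1}a_{2}>a_{3}$; accordingly, local instability is equivalent to its violation $a_{1}a_{2}\leq a_{3}$. My next step would be to compute $a_{1}a_{2}-a_{3}$ and factor it. The cross-terms in $\alpha S^{*}$ cancel cleanly and produce
\begin{equation*}
a_{1}a_{2}-a_{3} \;=\; \beta I^{*}\,(\beta S^{*}+\xi)\,\bigl(\beta I^{*}+\xi-\delta S^{*}\bigr),
\end{equation*}
after which the substitution $I^{*}=(\delta/\alpha)S^{*}$ turns the last factor into $\xi-\delta S^{*}(\alpha-\beta)/\alpha$. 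Because the prefactor is strictly positive, instability is equivalent to the scalar inequality $\alpha\xi\leq\delta S^{*}(\alpha-\beta)$, which already forces $\alpha>\beta$.

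The final and heaviest step is to convert this inequality on $S^{*}$ into conditions on the parameters only. For this I would use the auxiliary fact that squaring the explicit formula in \eqref{eq:fp} shows $S^{*}$ is the unique positive root of $f(x)=(\delta\beta/\xi)x^{2}+(\alpha+\delta)x-\alpha$. Since $f$ is strictly increasing on $[0,\infty)$ with $f(0)=-\alpha<0$, the inequality $S^{*}\geq c$ with $c=\alpha\xi/[\delta(\alpha-\beta)]$ is equivalent to $f(c)\leq 0$. Plugging $c$ into $f$, clearing the positive factor $\delta(\alpha-\beta)^{2}/\alpha$, and cancelling the $\alpha\beta\xi$ terms reduces the condition to $\alpha^{2}\xi\leq\delta(\alpha-\beta)(\alpha-\beta-\xi)$. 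Compatibility with $\delta>0$ forces $\alpha-\beta-\xi>0$, yielding the first bound in \eqref{eq:RH2}; solving for $\delta$ under this restriction gives exactly $\delta\geq\alpha^{2}\xi/[(\alpha-\beta)(\alpha-\beta-\xi)]$, recovering the second bound. The main obstacle is precisely this last algebraic reduction: it is elementary but requires careful bookkeeping to reach the clean factored form stated in the proposition.
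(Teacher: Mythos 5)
Your proposal is correct and follows essentially the same route as the paper: Routh--Hurwitz applied to the cubic characteristic polynomial of the Jacobian at $\X^*$, with positivity of the coefficients reducing everything to the single condition $a_1a_2\leq a_3$. Your coefficients and the factorization $a_1a_2-a_3=\beta I^*(\beta S^*+\xi)(\beta I^*+\xi-\delta S^*)$ check out, and your use of the quadratic $f(x)=(\delta\beta/\xi)x^2+(\alpha+\delta)x-\alpha$ to convert $S^*\geq \alpha\xi/[\delta(\alpha-\beta)]$ into the parameter conditions \eqref{eq:RH2} is a clean way of carrying out the ``some algebra'' that the paper leaves implicit.
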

\begin{proof}
See Appendix \ref{sec:appendix_proof_prop_1}.
\end{proof}

\begin{figure}[t]
\centering
\includegraphics[width=0.47\textwidth]{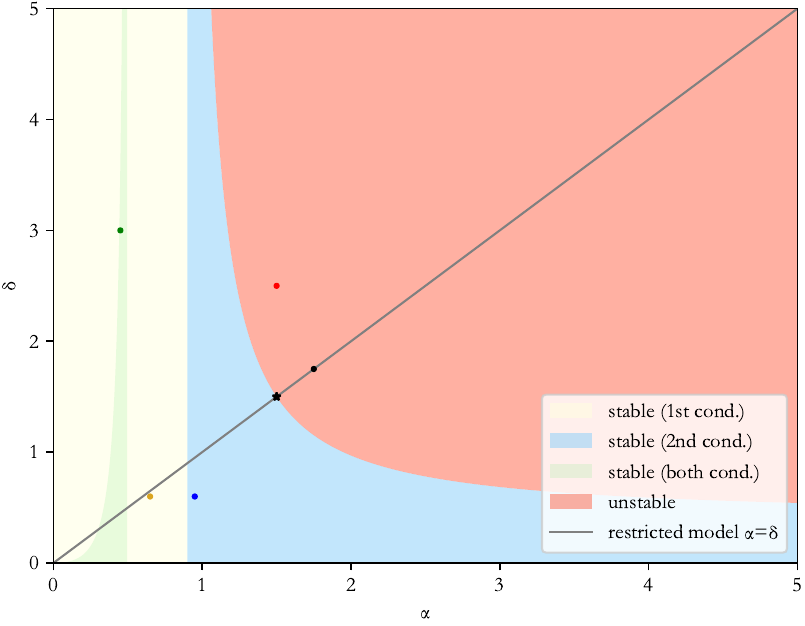}\qquad
\includegraphics[width=0.47\textwidth]{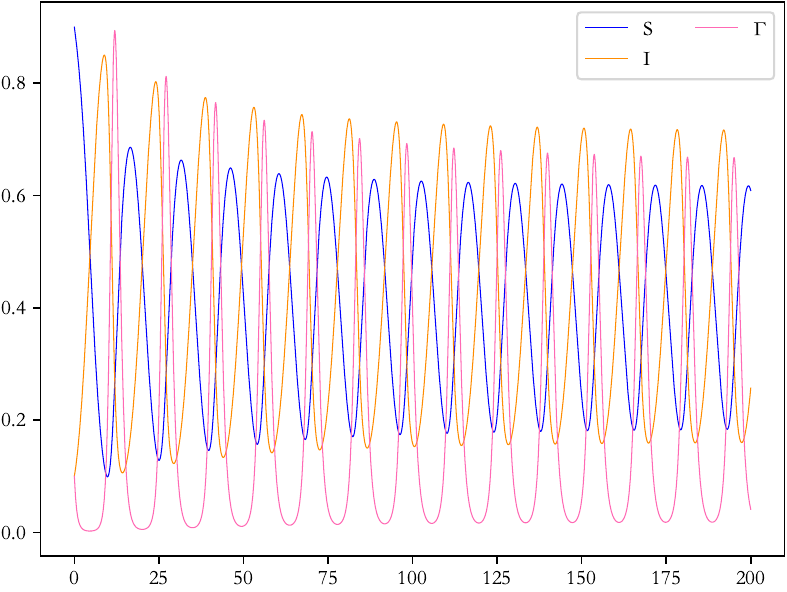}
\caption{Left: layout in the $(\alpha,\delta)$ parameter space, for $\beta=0.5$ and $\xi=0.4$, of the region of instability (red), and of stability due to the first condition (yellow), to the second (blue) or to both conditions (green) of Eq.~\eqref{eq:RH}. The five dots represent the parameters used to produce the graph in the right panel (black dot), and  Figures~\ref{fig:stab2} (yellow dot), \ref{fig:stab3} (blue dot), \ref{fig:stab1} (green dot), \ref{fig:nonstab} (red dot) in Appendix~\ref{app:simulations}.
%the \href{https://github.com/alessiomuscillo/modeling_waves/blob/main/SM\%20Mazzarisi\%2C\%20Muscillo\%2C\%20Pacati\%2C\%20Pin\%202024.pdf}{supplementary material}.
The black star highlights the Hopf bifurcation point of the restricted model with $\alpha=\delta$ (see Section~\ref{sec:bifhopf}). Right: Solution of the model for $\beta=0.5$, $\xi=0.4$, and $\alpha=\delta=1.75$ (black dot on the left graph), with $S(0)=0.9$, $I(0)=\Gamma(0)=0.1$.}\label{fig:stabregions}
\end{figure}

When $\X^*$ is an unstable fixed point, also known as a repeller, small perturbations away from the fixed point lead to trajectories that diverge from it over time. This instability means that the fixed point cannot act as an attractor for the system; instead, it causes nearby trajectories to be expelled in its vicinity. As a result, the system tends to evolve away from the unstable fixed point, potentially leading to either periodic, complex, or divergent dynamics. This can manifest in various ways, such as chaotic behavior, bifurcations, or transitions to other regions of the state space, significantly influencing the overall long-term behavior of the system. Figure~\ref{fig:stabregions} shows an example decomposition of the parameter space into stable and unstable regions. Extensive numerical computaions of the model in the region of instability support the conclusion that stable limit cycles appear, leading to a periodic evolution of the system.

We demonstrate below that a restricted version ($\alpha=\delta$) of the model undergoes a Hopf bifurcation at the crossing point to the unstable region, under a condition on parameters $\beta$ and $\xi$.

\subsection{Bifurcation analysis}\label{sec:bifhopf}
A Hopf bifurcation is a critical phenomenon in dynamical systems where a fixed point of a system undergoes a qualitative change as a parameter is varied, leading to the emergence of a periodic orbit. Specifically, it occurs when a pair of complex conjugate eigenvalues of the Jacobian matrix of the flow operator crosses the imaginary axis from the left half-plane to the right half-plane while all the other eigenvalues have negative real part, see, e.g., \cite{marsden2012hopf}. This transition indicates a shift from stability to instability of the fixed point, resulting in oscillatory behavior.

\begin{proposition}
\label{hopf}
Solution $\X^*$ of ODEs~\eqref{eq:SIRSfs}--\eqref{eq:SIRSfg} with $\alpha=\delta$ is a Hopf bifurcation at $$
\alpha=\beta+\xi+\sqrt{\xi(\beta+\xi)}
$$ 
if
%$$
%\alpha>\frac{\left(\sqrt{\xi}-\frac23\sqrt{\beta+\xi}\right)}{\left(\sqrt{\xi}-\sqrt{\beta+\xi}\right)^2}\beta\sqrt{\beta+\xi}.
%$$
$$
\beta>\frac{11}{25}\xi.
$$
\end{proposition}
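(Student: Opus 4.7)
The plan is to verify the three hypotheses of the Hopf bifurcation theorem (see, e.g., \cite{marsden2012hopf}) at the candidate critical value $\alpha_c=\beta+\xi+\sqrt{\xi(\beta+\xi)}$, treating $\alpha$ as the bifurcation parameter while $\beta$ and $\xi$ are held fixed. The conditions to check are: (H1) the Jacobian at $\X^*$ has a simple pair of purely imaginary eigenvalues $\pm i\omega$ together with a third eigenvalue of nonzero real part; (H2) transversality, $\tfrac{d}{d\alpha}\mathrm{Re}\,\lambda_\pm(\alpha)|_{\alpha_c}\neq 0$; (H3) non-degeneracy, i.e.\ the first Lyapunov coefficient $l_1$ is nonzero. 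I expect (H1) and (H2) to reduce to clean algebra, while (H3) is the main obstacle and is where the inequality $\beta>\tfrac{11}{25}\xi$ enters.

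For (H1), I would first exploit $\alpha=\delta$: the third component of $\Phi(\X)=0$ forces $I^*=S^*$, and then the first component reduces to $\beta(S^*)^2+2\xi S^*-\xi=0$. In particular $S^*$ depends only on $\beta,\xi$, not on $\alpha$. Computing the Jacobian at $\X^*$ with $\Gamma^*=\beta S^*$, the characteristic polynomial takes the form $p(\lambda)=\lambda^3+A_1\lambda^2+A_2\lambda+A_3$ with $A_1=\beta S^*+\xi$, $A_2=\alpha\beta(S^*)^2+\beta S^*(\beta S^*+\xi)$, and $A_3=2\alpha\beta(S^*)^2(\beta S^*+\xi)$. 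A pair $\pm i\omega$ together with the real root $-A_1<0$ appears exactly when the Routh--Hurwitz equality $A_1A_2=A_3$ holds. A short factorization gives $A_1A_2-A_3=\beta S^*(\beta S^*+\xi)\bigl(\beta S^*+\xi-\alpha S^*\bigr)$, so the condition reduces to $\alpha S^*=\beta S^*+\xi$. Eliminating $S^*$ via the fixed-point quadratic yields $(\alpha-\beta)^2-2\xi(\alpha-\beta)-\beta\xi=0$, whose positive root is precisely $\alpha_c=\beta+\xi+\sqrt{\xi(\beta+\xi)}$, and at this point $\omega^2=A_2>0$. Transversality (H2) then follows by implicit differentiation of $p(\lambda(\alpha))=0$ at $\lambda=i\omega$; using that $S^*$, and hence $A_1$, are independent of $\alpha$, a short computation yields
\begin{equation*}
\mathrm{Re}\,\lambda'(\alpha_c) \;=\; \frac{A_3'-A_1A_2'}{2(\omega^2+A_1^2)} \;=\; \frac{\beta(S^*)^2(\beta S^*+\xi)}{2(\omega^2+A_1^2)} \;>\; 0,
\end{equation*}
so (H2) holds unconditionally.

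The hard part is (H3): establishing $l_1\neq 0$ and identifying its sign region in the $(\beta,\xi)$-plane. I would (i) compute the right and left eigenvectors of the Jacobian at $\X^*$ for the eigenvalues $\pm i\omega$, normalized so that the usual biorthogonality holds; (ii) project the Taylor expansion of $\Phi$ about $\X^*$ onto the two-dimensional center eigenspace—noting that $\Phi$ contains only quadratic nonlinearities, which eliminates one family of terms in the standard formula; and (iii) assemble $l_1$ via Kuznetsov's expression for three-dimensional systems. After substituting the explicit $S^*$ and $\omega^2$ as functions of $\beta,\xi$, I expect $l_1$ to reduce to a rational function of $(\beta,\xi)$ whose numerator factors through a single sign-determining polynomial, with $25\beta-11\xi$ (or an equivalent factor) being the root locus; the stated inequality $\beta>\tfrac{11}{25}\xi$ would then correspond exactly to the region where $l_1$ has the required sign—presumably $l_1<0$, yielding a supercritical Hopf and the stable limit cycles seen in the numerics. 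This last algebraic reduction is the step where I anticipate the heaviest bookkeeping and where symbolic computation would be essentially indispensable.
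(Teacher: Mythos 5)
Your eigenvalue analysis is correct and, for that part of the argument, sharper than the paper's. The factorization $A_1A_2-A_3=\beta S^*(\beta S^*+\xi)\left(\beta S^*+\xi-\alpha S^*\right)$ checks out, so at $\alpha_c$ the characteristic polynomial splits as $(\lambda+A_1)(\lambda^2+A_2)$ with $A_1,A_2>0$, giving the purely imaginary pair and the negative real root directly; your transversality computation (using that $S^*$, hence $A_1$, is independent of $\alpha$) is also correct and gives $\mathrm{Re}\,\lambda'(\alpha_c)>0$ unconditionally. The paper takes a more roundabout route: it reduces the cubic to depressed form $y^3+py+q$, uses $p>0$ as a sufficient condition for one real root plus a complex-conjugate pair on an open set of parameters, and then invokes Proposition~\ref{unstable} plus continuity to conclude that the complex pair's real part must vanish at the stability boundary $\alpha_c=\beta+\xi+\sqrt{\xi(\beta+\xi)}$. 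Your direct factorization makes that detour unnecessary and in fact shows the eigenvalue-crossing conditions hold for \emph{all} $\beta,\xi>0$.

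This exposes the one genuine error in your proposal: your conjecture about where $\beta>\tfrac{11}{25}\xi$ enters. In the paper this condition has nothing to do with the first Lyapunov coefficient; it is precisely the condition under which $\alpha_c$ lies in the interior of the region $\{p>0\}$, i.e.\ it is what makes the paper's indirect continuity argument go through (the threshold comes from requiring $5(\beta+\xi)>6\sqrt{\xi(\beta+\xi)}$, equivalently $25\beta>11\xi$). The paper never computes $l_1$ and never verifies non-degeneracy or the genericity of the bifurcation; it only establishes the eigenvalue crossing, which is the definition of Hopf bifurcation it states in Section~\ref{sec:bifhopf}. So your step (H3) remains a plan rather than a proof, and the expectation that its sign locus reproduces $25\beta-11\xi$ is unsupported (and almost certainly false, since that factor demonstrably arises elsewhere). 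If the proposition is read with the paper's eigenvalue-crossing definition, your (H1)--(H2) already prove it, indeed without the hypothesis on $\beta$ and $\xi$; if it is read as the full Hopf theorem with guaranteed periodic orbits, then the Lyapunov-coefficient computation is an outstanding gap in your write-up --- and, it should be said, in the paper's as well.
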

\begin{proof}
See Appendix \ref{sec:appendix_proof_prop_2}.
\end{proof}
This sufficient condition provides a rigorous example of the model displaying periodic orbits when the fixed-point equilibrium is broken, confirming the intuition that the combined mechanism of interest saturation and influencing enthusiasm could give rise to repeated cycles of interest and disinterest in opinion dynamics as described within the SIRS modeling framework. For example Figure~\ref{fig:stabregions} shows the paths followed by the three state variables of the restricted model in the region of the instability, displaying their periodic orbits.

Indeed, further insights into the behavior of our model under a wide range of parameter values are provided in Appendix \ref{app:simulations}. There, we present an extensive set of numerical computation of trajectories, illustrating how the dynamics of the system change with variations in key parameters such as transmission rate, recovery rate, and feedback strength. These computations  complement the theoretical results discussed here, offering a deeper understanding of the model’s oscillatory patterns and its robustness across different scenarios.

\section{Empirical application}
\label{sec:data}

Digitalization allows us to track ideas' popularity in many ways nowadays, from low-level measures based on granular social network data to high-level indicators aggregating all the information.
\href{https://trends.google.com/trends/}{Google Trends} represent aggregated indicators of the popularity of {\it research queries} in Google Search across various regions and languages, measured as the relative weekly search volumes normalized to the highest value in the investigated period, which is set to 100.
\begin{figure}[th]
    \centering
    \includegraphics[width=\textwidth]{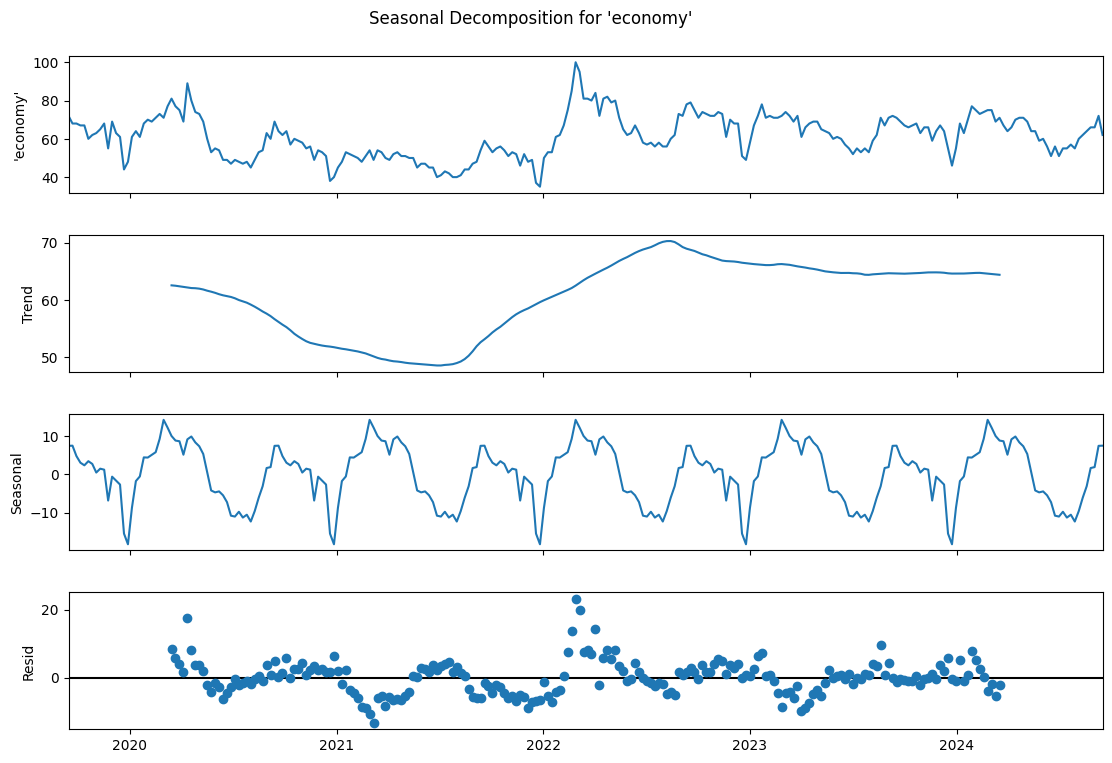}
    \caption{Seasonal decomposition for the word `{\it economy}' searched in Google Trends. After removing the trend and annual seasonality, we are left with the residuals (bottom panel). See Appendix~\ref{app:seasonal_decomposition}
    %\href{https://github.com/alessiomuscillo/modeling_waves/blob/main/SM\%20Mazzarisi\%2C\%20Muscillo\%2C\%20Pacati\%2C\%20Pin\%202024.pdf}{supplementary material}
    for additional information.}
    \label{fig:economy}
\end{figure}
We show the research query for `{\it economy}' in a 5-year period from September 15th, 2019 to September 15th, 2024 in the top panel of Figure~\ref{fig:economy}. Observed search volumes can depend on several effects, either exogenous or endogenous.

Media coverage of a particular topic can generate significant activity in Google searches, leading to trends in the search volume over time. Seasonal factors naturally create periodic patterns in Google searches. For example, queries like `{\it swimsuit}' or `{\it skiing}' are largely influenced by the time of year. Generally, any event, whether regional or global, can capture people's attention, causing random shifts in search volume dynamics. Meanwhile, conversations, discussions, and other forms of communication can also draw people's attention to specific topics. As such, social interactions can drive search volumes over time. There may also exist many other social mechanisms, all of them contributing to the aggregated dynamics of Google trends.

This section aims to assess whether Google Trends, as a proxy for the popularity of ideas, exhibits dynamical patterns that can be consistently explained by the endogenous effects outlined in our model. After accounting for known influences like trending and seasonality, endogenous effects should manifest as cyclical patterns instead of purely random evolution.

%Mass media's attention on some particular topic can create high activity in Google search, resulting in trends in the time series of observed search volumes.   %parte sui vari meccanismi che possono dare origine alle dinamiche osservate: trend, stagionalità, ed effetti endogeni. Qui vogliamo vedere se rimosso il trend e la stagionalità annuale, la dinamica residua risulta coerente con il nostro modello più di una descrizione puramente random. ...

%The aim of this section is to test whether our model of endogenous oscillations can effectively capture the behavior of certain trends, particularly those reflected in online search patterns. 
%By analyzing the dynamics of search queries over time, we aim to determine whether the inherent periodicities in these queries align with the oscillations predicted by our model, or if they are better explained by random fluctuations such as those found in random walks. 

We collected a dataset of the 1000 most searched queries in the United States in September 2024 from \href{https://www.google.com/url?q=https%3A%2F%2Fdataforseo.com%2Ffree-seo-stats%2Ftop-1000-keywords}{Data For SEO} website.\footnote{%
The preprocessing of Google Trends data, including detrending and standardization methods, is discussed in Appendix \ref{app:data}. These steps ensure that the observed patterns are suitable for comparison with our model’s predictions.}
For each word, the time series of search volume observed on any week from September 15th, 2019 to September 15th, 2024 is a proxy of the word's popularity over time, providing insight on how interest in each topic fluctuates. For a genuine assessment, we first remove both trend and annual seasonality: (i) trend is estimated as a local average of observations over a rolling time window of 52 weeks, while (ii) annual seasonality is estimated starting from the detrended time series of observed search volume as the average of the same weeks over every year. An example of the two patterns for the word `{\it economy}' is shown in the middle panels of Figure~\ref{fig:economy}. After removing trend and seasonality, the time series of residuals (see the bottom panel of Figure~\ref{fig:economy}) is standardized by rescaling data in a range $[-1,1]$ and compared with two reference processes: the proposed model of endogenous oscillations – a system simulating regular, internally driven cycles of `infected' - and a random walk - a model describing the contagion of an idea as a purely random variation of `infected' at any time. Here, we connect the concept of `infected' with the popularity of an idea, based on the intuition that the larger the number of `infected,' the larger the observed search volume of the corresponding word. For a fair comparison, the same standardization scheme applied to the time series of residuals is also applied to the simulations of both reference processes. 

The comparison uses Dynamic Time Warping (DTW) as distance, see, e.g., \cite{hastie1991model}. DTW is the generalization of the Euclidean distance between two time series, allowing for non-linear stretching or shrinking of the time axis to find the optimal alignment between the two. DTW is particularly useful when comparing time series that may not be perfectly synchronized but display similar patterns.
We use dynamic DTW to measure the similarity between the observed time series and the trajectories generated by our model. A comprehensive explanation of the DTW metric and its implementation is provided in Appendix \ref{app:data}.

%This choice is driven by our aim to present a comparative analysis that supports the results above, without offering a formal calibration of our model.

We first measure the DTW distance between the time series of standardized residuals and our model with the best value for $\beta$ (i.e. the transmission rate) in terms of DWT.\footnote{A detailed explanation of the implementation is in Appendix \ref{app:data} and the code is available at \href{https://github.com/alessiomuscillo/modeling_waves/blob/main/Code_for_Mazzisi_et_al_2024.ipynb}{this link}.} Then, the DTW distance is computed for 500 simulations of a random walk, whose average is finally compared to our model. %Moreover, {\color{red} the statistical significance of the difference is tested rigorously using {\bf ...XXX TBC XXX...}.}
Figure~\ref{fig:distances} shows the scatter plot between the DWT distances associated with our model and the random walks for the 1000 most searched queries. The time series of residuals systematically show a closer match to the dynamic behavior predicted by our model. To ensure robustness, a circle in the scatter plot is filled with blue when the DTW distance associated with our model is smaller than the first quartile of the empirical distribution built with 500 simulations of the random walk.
Our analysis shows that the endogenous cyclical effects capture the underlying patterns of Google searches statistically better than what would be expected from random fluctuations in the 67\% of cases.

\begin{figure}[t]
    \centering
    \includegraphics[width=.65\textwidth]{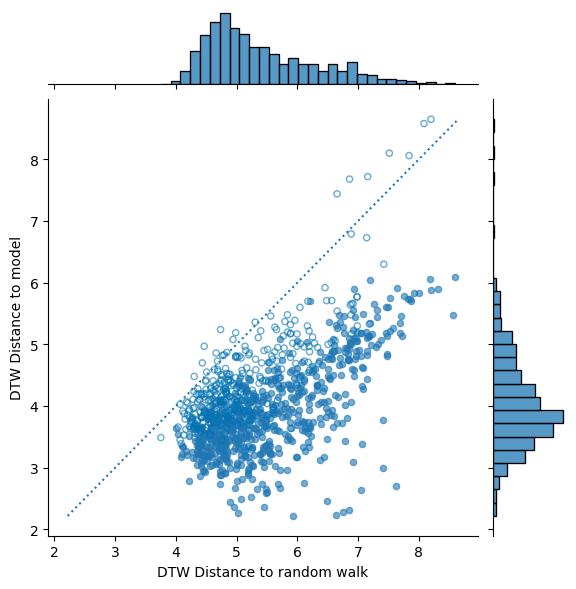}
    \caption{Scatter plot showing the relationship between the Dynamic Time Warping (DTW) distance of residuals to a random walk (x-axis) and the DTW distance to the model (y-axis). (See 
    %\href{https://github.com/alessiomuscillo/modeling_waves/blob/main/SM\%20Mazzarisi\%2C\%20Muscillo\%2C\%20Pacati\%2C\%20Pin\%202024.pdf}{supplementary material} 
    Appendix~\ref{app:distance_residuals} for additional information.) Each circle represents a word from the dataset. Marginal histograms indicate the distribution of DTW distances for both reference processes. The dotted diagonal line represents the case when the average DTW distance to the random walk equals the DTW distance to the model. A circle below the diagonal suggests that the residuals are more similar to our model than to the random walk. The circle is filled with blue when the DTW distance associated with the model is smaller than the first quartile of the empirical distribution built with 500 simulations of the random walk, indicating that the model better captures the underlying oscillations, in a statistical sense.}
    \label{fig:distances}
\end{figure}

These results provide evidence that the temporal dynamics of online search behavior are not purely random but exhibit structured oscillations that can be explained, at least in part, by endogenous processes. Our model, therefore, offers a meaningful framework for understanding how certain search queries experience periodic surges in interest, reflecting broader patterns in human behavior.

\clearpage

\section{Conclusion}
\label{sec:conclusion}

Our analysis demonstrates the existence of complex dynamics, including cycles and potential bifurcations, providing insights into the underlying mechanisms that govern the rise and fall of ideas. This framework can be extended to explore additional feedback mechanisms and their broader implications in the diffusion of innovations and cultural trends.
The feedback mechanism introduced in our modified SIRS model captures the dynamic interaction between interest saturation and influencing enthusiasm, leading to fluctuating steady states. This cyclical behavior mirrors real-world observations where ideas periodically gain and lose popularity.

Moreover, our study highlights the importance of considering endogenous factors in modeling idea diffusion. The rate at which individuals lose interest (interest saturation) and the influence exerted by current promoters (influencing enthusiasm) are both crucial in understanding how ideas spread and fade. By incorporating these factors, our model provides a more comprehensive view of the diffusion process, which can be applied to various fields, including marketing, public health campaigns, and the spread of innovations.

Future research could expand on this framework by incorporating stochastic elements to account for random external influences and by exploring the impact of network structures on the diffusion dynamics. Additionally, empirical validation of the model using real-world data on idea diffusion could further enhance its applicability and robustness.

In conclusion, our modified SIRS model with feedback provides a valuable tool for understanding the complex and dynamic nature of idea popularity. It offers a theoretical foundation that can be built upon to explore various aspects of social influence and cultural evolution, enabling the development of more effective strategies in managing and promoting ideas.

%\clearpage

\bibliographystyle{ecca}
\bibliography{biblio}

\appendix

\section{Proofs}
\label{app:proofs}

\subsection{Proof of Proposition \ref{unstable}}
\label{sec:appendix_proof_prop_1}

\begin{proof}
First, we compute the eigenvalues of the first-order derivatives of $\Phi$ (the Jacobian).

At $\X=(S,I,\Gamma)$, the Jacobian of $\Phi$ is
\[
J(\bm x) = \begin{pmatrix}
                -(\beta I+\xi)  &-(\beta S + \xi) &0\\
                \beta I         &\beta S-\Gamma   &-I\\
                -\delta\Gamma   &\alpha\Gamma     &\alpha I-\delta S
            \end{pmatrix}
            \enspace,
\]
having characteristic polynomial
\begin{align}
p(\lambda) &= \det[J(\X)-\lambda \mathbb{I}]  \notag\\
           &=-\lambda^3\notag\\
           &\quad- [(\beta-\alpha) I+(\delta-\beta) S+\Gamma +\xi]\lambda^2 \notag\\
           &\quad- (\alpha\Gamma+\beta^2S+\beta\xi)I\lambda \notag\\
           &\quad- [\beta(2\delta\Gamma-\beta\delta S+\alpha\beta I)SI+\beta\xi(\alpha I-\delta S)S+\delta\xi(S+I)\Gamma]\label{eq:charpolgen}\enspace.
\end{align}
Recall that $p(\lambda)$ is said to be \emph{stable} if all roots have negative real part.
Ignoring uninteresting cases, assume all parameters to be strictly positive and $\X=(I,S,\Gamma)\in[0,1]^2\times \mathbb{R}_{>0}$. Under these assumptions, the odd-order coefficients of $p(\lambda)$ are strictly negative. 
The Routh--Hurwitz stability criterion \cite[sec.~11.4]{rahman2002analytic} for $p(\lambda)$ becomes then:
\begin{gather}
\left\{
\begin{aligned}
(\beta-\alpha) I+(\delta-\beta) S+\Gamma +\xi &>0\\
\beta(2\delta\Gamma-\beta\delta S+\alpha\beta I)SI+\beta\xi(\alpha S-\delta S)S+\delta\xi(S+I)\gamma&>0\\
(\alpha\Gamma+\beta^2S+\beta\xi)[(\beta-\alpha) I+(\delta-\beta) S+\Gamma +\xi]
-\beta(2\delta\Gamma-\beta\delta S+\alpha\beta I)SI\\
-\beta\xi(\alpha I-\delta S)S-\delta\xi(S+I)\Gamma&>0
\end{aligned}
\right.\label{eq:RHgen}
\end{gather}
As model~\eqref{eq:SIRSfs}--\eqref{eq:SIRSfg} is non-linear, condition~\eqref{eq:RHgen} is a necessary and sufficient condition for \emph{local} stability at $\X$ of the model.

At the fixed point $\X^*=(I^*,S^*,\Gamma^*)$ in Eq. (\ref{eq:fp}), the characteristic polynomial~\eqref{eq:charpolgen} simplifies as
\begin{align}
p(\lambda) &= -\lambda^3 - (\beta I^*+\xi)\lambda^2 - (\alpha\Gamma^*+\beta^2S^*+\beta\xi)I^*\lambda - (\beta\delta S^*+\delta\xi+\alpha\beta I^*+\alpha\xi)\Gamma^* I^*
\enspace,\label{eq:charpoly}
\end{align}
and has strictly negative coefficients, hence the first two inequalities of~\eqref{eq:RHgen}
are satisfied. Therefore, the local stability condition at $\X^*$ is the third inequality only, which with some algebra can be stated as
\begin{align}
  \alpha&\le\beta+\xi
  &&\text{or}
  &\delta&<\frac{\alpha^2\xi}{(\alpha-\beta)(\alpha-\beta-\xi)}\enspace.\label{eq:RH}
\end{align}
Therefore, its complement, i.e.\ the statement, provides the local instability condition.

This concludes the proof.
\end{proof}

\subsection{Proof of Proposition \ref{hopf}}
\label{sec:appendix_proof_prop_2}

\begin{proof}
When $\alpha = \delta$, the fixed point $\X^*$ in Eq.~(\ref{eq:fp}) reads as
\begin{equation}\label{eq:fprestricted}
\X^*= \begin{cases}
    \displaystyle S^* = \frac{\left(-1+\sqrt{1+\frac{\beta}{\xi}}\right)}{\frac{\beta}{\xi}} = \frac{1}{1+\sqrt{\frac{\beta+\xi}{\xi}}},\\
    \displaystyle I^* = S^*,\\
    \displaystyle \Gamma^* = \beta S^*.
    \end{cases}
\end{equation}
and the characteristic polynomial in Eq.~(\ref{eq:charpolgen}) at $\X^*$ simplifies as
\begin{equation}\label{eq:charpolyre}
\begin{split}
p(\lambda) &= -\lambda^3 - (\beta S^*+\xi)\lambda^2 - (\alpha\beta S^*+\beta^2S^*+\beta\xi)S^*\lambda - [2\alpha\beta^2(S^*)^3+2\alpha\beta\xi(S^*)^2]\\
& =a\lambda^3+b\lambda^2+c\lambda+d
\end{split}
\end{equation}
where
\begin{align*}
a &= -1,
%\\
&b &= -\sqrt{\xi(\beta+\xi)},
%\\
%c &= -\left[\alpha[-\xi+\sqrt{\xi(\beta+\xi)}]+\beta[-\xi+\sqrt{\xi(\beta+\xi)}]+\beta\xi\right]\frac{-\xi+\sqrt{\xi(\beta+\xi)}}{\beta}\\
%  &= \left[(\alpha+\beta)(\xi+b)-\beta\xi\right]\frac{-\xi-b}{\beta}\\
&c  &= -\frac{\alpha+\beta}{\beta}(\xi+b)^2 +\xi(\xi+b),
%\\
  %&= \\
  %&= -[(\alpha+\beta)[-\xi+\sqrt{\xi(\beta+\xi)}]+\beta\xi]\frac{-\xi+\sqrt{\xi(\beta+\xi)}}{\beta}\\
  %&= -\frac1\beta(\alpha+\beta)[-\xi+\sqrt{\xi(\beta+\xi)}]^2 - \xi[-\xi+\sqrt{\xi(\beta+\xi)}]\\
  %&= -\frac1\beta(\alpha+\beta)[2\xi^2+\xi\beta+2\xi\sqrt{\xi(\beta+\xi)}] +\xi^2-\xi\sqrt{\xi(\beta+\xi)}\\
  %&= \frac{-2(\alpha+\beta)\xi^2 - (\alpha+\beta)\beta\xi-2(\alpha+\beta)\xi\sqrt{\xi(\beta+\xi)}+\beta\xi^2-\beta\xi\sqrt{\xi(\beta+\xi)}}{\beta},\\
  %&= \frac{(-2\alpha+\beta)\xi^2 - (\alpha+\beta)\beta\xi-2(\alpha+\beta)\xi\sqrt{\xi(\beta+\xi)}-\beta\xi\sqrt{\xi(\beta+\xi)}}{\beta},\\
%d &= - 2\alpha\beta[\beta S^*+\xi](S^*)^2\\
%  &=   2\alpha\beta b(S^*)^2\\
&d  &=  2\frac\alpha\beta b(\xi+b)^2.
\end{align*}
The solution of the cubic equation $a\lambda^3+b\lambda^2+c\lambda+d=0$ is well-known in Algebra; see, e.g., the pioneering work by \cite{cardano2007rules}. First, a change of variable is applied, i.e. $\lambda=y-\frac{b}{3a}$. Then, a reduced form $y^3+py+q=0$ is obtained, with (after some algebra)
\begin{align*}
p &= \frac{c}{a}-\frac{b^2}{3a^2}
%\\
  %&=\frac{\alpha+\beta}{\beta}(\xi+b)^2 -\xi(\xi+b) -\frac13b^2 \\
  %&=\frac{\alpha(\xi+b)^2+\beta(\xi+b)^2 -\beta\xi(\xi+b) -\frac13\beta b^2}{\beta} \\
  %&=\frac{\alpha(\xi^2+b^2+2b\xi)+\beta(\xi^2+b^2+2b\xi ) -\beta\xi^2 -\beta b\xi -\frac13\beta b^2}{\beta} \\
  %&=\frac{\alpha\xi^2+\alpha b^2+2\alpha b\xi+\frac23\beta b^2+\beta b\xi}{\beta} \\
  %&=\frac{\alpha\xi^2 +(\alpha+\frac23\beta)(\beta+\xi)\xi +(2\alpha+\beta) b\xi}{\beta} \\
%&
  =\frac{\alpha\xi +(\alpha+\frac23\beta)(\beta+\xi) +(2\alpha+\beta) b}{\beta}\xi,\\
  %&= \frac{\alpha\xi-(2\alpha+\beta)\sqrt{\xi(\beta+\xi)}+(\alpha+\frac23\beta)(\beta+\xi)}{\beta}\xi\\
q &= \frac{d}{a}-\frac{bc}{3a^2}+\frac{2b^3}{27a^3}
%\\
  %&= -2\frac\alpha\beta b(\xi+b)^2 
  %   - \frac13\left[-\frac{\alpha+\beta}{\beta}b(\xi+b)^2 +\xi(\xi+b)b\right]
  %   -\frac2{27}b^3\\
%&
=\left[-\frac53\frac\alpha\beta (\xi+b)^2 + \frac13\xi b + \frac7{27}b^2\right]b.
%  &=\left[-\frac53\frac\alpha\beta \xi^2
%    + \left(\frac13-\frac{10}3\frac\alpha\beta\right) \xi b 
%    + \left(\frac7{27}-\frac53\frac\alpha\beta\right) b^2\right]b\\
\end{align*}
The type of solution of Eq.~(\ref{eq:charpolyre}) depends on
\begin{equation}
    \Delta = \frac{q^2}{4}+\frac{p^3}{27};
\end{equation}
in particular, the cubic equation has one real root and two complex conjugate roots if and only if $\Delta>0$.

A sufficient condition for $\Delta>0$ is $p>0$. The condition $p>0$ is equivalent to
$$
\alpha\xi+(\alpha+\frac23\beta)(\beta+\xi) -(2\alpha+\beta)\sqrt{\xi(\beta+\xi)}>0,
$$
%\begin{gather*}
%    \alpha\xi-(2\alpha+\beta)\sqrt{\xi(\beta+\xi)}+(\alpha+\frac23\beta)(\beta+\xi)>0,\\
%    \alpha\left[\xi-2\sqrt{\xi(\beta+\xi)}+\beta+\xi\right]>\beta\sqrt{\xi(\beta+\xi)}-\frac23\beta(\beta+\xi),\\
%    \alpha\left[2\xi+\beta-2\sqrt{\xi(\beta+\xi)}\right]>\left[\sqrt{\xi(\beta+\xi)}-\frac23(\beta+\xi)\right]\beta,\\
%    \alpha\left[\left(\sqrt{\xi}-\sqrt{\beta+\xi}\right)^2\right]>\left(\sqrt{\xi}-\frac23\sqrt{\beta+\xi}\right)\beta\sqrt{\beta+\xi}.
%\end{gather*}
which becomes after some algebra
$$
\alpha\left[\left(\sqrt{\xi}-\sqrt{\beta+\xi}\right)^2\right]>\left(\sqrt{\xi}-\frac23\sqrt{\beta+\xi}\right)\beta\sqrt{\beta+\xi}.
$$
As $\beta,\xi>0$, the left-hand side of this last inequality is positive. Therefore, the sufficient condition for $\Delta>0$ can be stated as
\begin{equation}\label{eq:p>0}
    \alpha>\frac{\left(\sqrt{\xi}-\frac23\sqrt{\beta+\xi}\right)}{\left(\sqrt{\xi}-\sqrt{\beta+\xi}\right)^2}\beta\sqrt{\beta+\xi}.
\end{equation}
When Eq.~(\ref{eq:p>0}) is satisfied, the characteristic polynomial in Eq.~(\ref{eq:charpolyre}) has one real root and two complex conjugate roots. Moreover, the real root is always negative because of Descartes' rule of signs applied to Eq.~(\ref{eq:charpolyre}), but with a sign change for the variable ($\lambda\rightarrow -\lambda$).

In the domain of parameters, the crossing point to the unstable region is given by the second inequality in Eq.~(\ref{eq:RH2}) with equality sign for $\alpha=\delta$, provided that the first inequality is satisfied. It is
\begin{equation}\label{eq:cp}
    \alpha = \beta+\xi+\sqrt{\beta\xi+\xi^2}.
\end{equation}

It remains to prove that the crossing point from stable (all roots have negative real part) to unstable (the two complex conjugate roots have positive real part) regions is an interior point of the parameter subspace defined by Eq.~(\ref{eq:p>0}). In this case, Proposition~\ref{unstable} guarantees that the real part of the two complex conjugate roots is zero at the crossing point because of the continuity of the solution of the cubic equation.

The crossing point is an interior point of the open subspace defined by Eq.~(\ref{eq:p>0}) when
$$
\beta+\xi+\sqrt{\beta\xi+\xi^2}>\frac{\left(\sqrt{\xi}-\frac23\sqrt{\beta+\xi}\right)}{\left(\sqrt{\xi}-\sqrt{\beta+\xi}\right)^2}\beta\sqrt{\beta+\xi}.
$$
The last inequality can be stated as
\begin{equation}\label{eq:intpoint}
\beta\frac{5(\beta+\xi)-6\sqrt{\xi(\beta+\xi)}}{3(\beta+2\xi)-6\sqrt{\xi(\beta+\xi)}}>0.
\end{equation}
Eq.~(\ref{eq:intpoint}) is satisfied when both the numerator and denominator on the left-hand side are either positive or negative. As $\beta,\xi>0$, the latter condition must be discarded. The former one results in a system of inequalities whose solution is
$$
\beta>\frac{11}{25}\xi.
$$
This concludes the proof.
\end{proof}

%Let's look at the two thresholds for $\alpha$ from \eqref{eq:RH}.
%Since 
%\[
%\beta < \frac{2(\beta+\xi)}{2\beta+\xi}\beta < \beta + \xi,
%\]
%only for $\alpha < \beta$ or $\alpha > \beta + \xi$ can we have instability.

%Let's now look at the two thresholds for $\delta$ from \eqref{eq:RH}.
%The threshold $\frac{\alpha^2\xi}{(\alpha-\beta)(\alpha-\beta-\xi)}$ is below the threshold $\frac{\alpha^2\xi}{2\beta(\beta+\xi)-\alpha(2\beta+\xi)}$, if and only if $\alpha \leq \beta + \xi$.
%This means that when $\alpha<\beta$ it is not possible not to satisfy either the first or the third conditions from \eqref{eq:RH}.

%If instead $\alpha > \beta + \xi$, then the first two conditions are not satisfied, so we can just look at the third condition for $\delta$ that needs to be not respected.

%\section{Simulations}

\section{Numerical solutions of the model}
\label{app:simulations}

Remembering the interpretation of $\alpha$ and $\delta$ as interest saturation and influencing enthusiasm, the inequalities from Proposition~\ref{unstable} tell us that the system fluctuates if and only if they are both high enough. Figure~\ref{fig:stabregions} shows an example decomposition of the parameter space into stable and unstable regions.
\begin{figure}[!htb]
\centering
\includegraphics[width=0.7\textwidth]{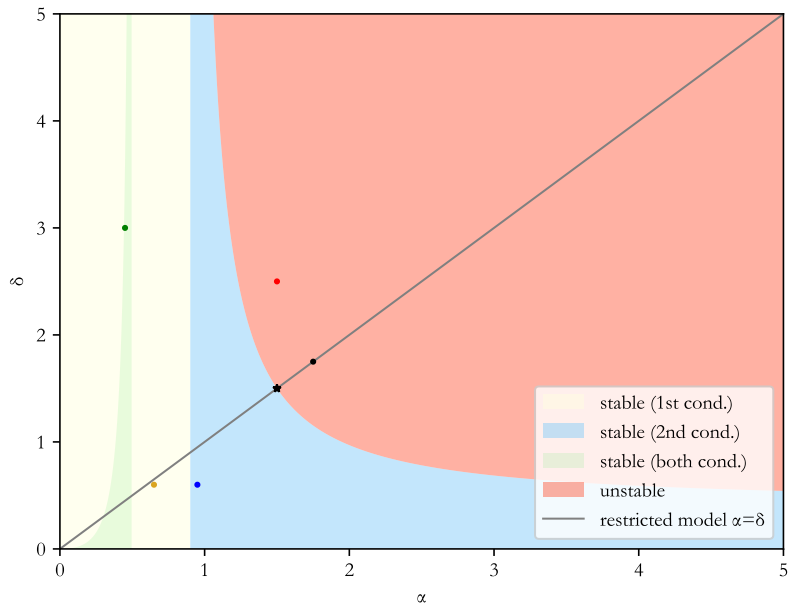}
\caption{Layout in the $(\alpha,\delta)$ parameter space, for $\beta=0.5$ and $\xi=0.4$, of the region of instability (red), and of stability due to the first condition (yellow), to the second (blue) or to both conditions (green). The five dots represent the parameters used to produce Figures~\ref{fig:stab2} (yellow dot), \ref{fig:stab3} (blue dot), \ref{fig:stab1} (green dot), \ref{fig:nonstab} (red dot), and~\ref{fig:nonstab2} (black dot). The black star highlights the Hopf bifurcation point of the restricted model with $\alpha=\delta$, the effects of crossing which are shown in Figures~\ref{fig:Hopf1} and~\ref{fig:Hopf2}.}\label{fig:stabregions}
\end{figure}

Figures~\ref{fig:stab2}, \ref{fig:stab3}, and~\ref{fig:stab1} show examples of the model dynamics when the parameters do satisfy the first, or the second, or both conditions of stability, respectively. Figures~\ref{fig:nonstab} and~\ref{fig:nonstab2} show examples of parameters satisfying the instability condition.

Figures~\ref{fig:Hopf1} and~\ref{fig:Hopf2} illustrate the evolution of the solution of the restricted model when crossing a Hopf bifurcation in the parameter space.

\begin{figure}[!htb]
    \centering
    \begin{tabular}{@{}cc@{}}
    \includegraphics[width=0.38\textwidth]{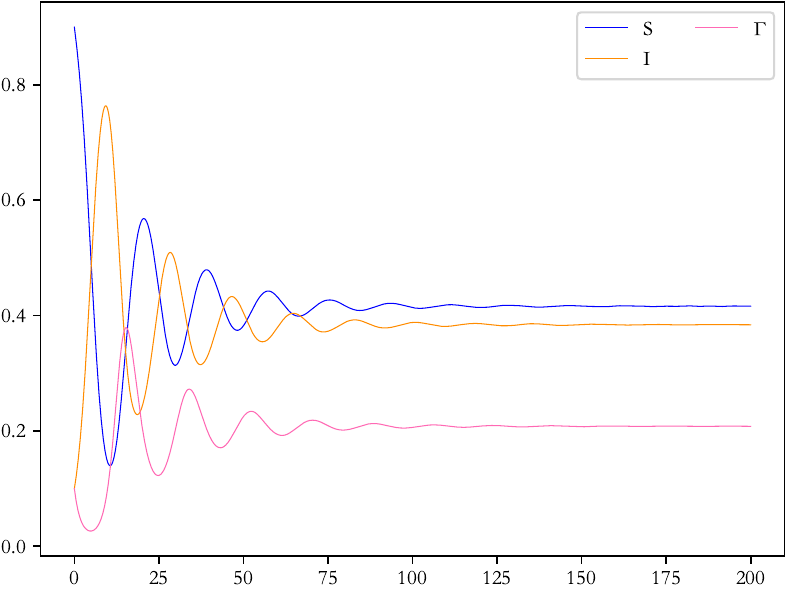}&\includegraphics[width=0.38\textwidth]{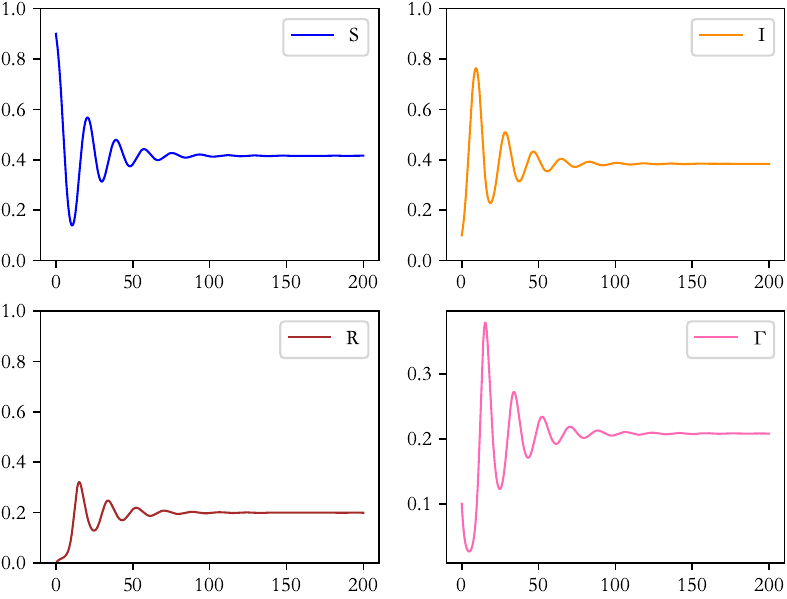}\\
    \includegraphics[width=0.38\textwidth]{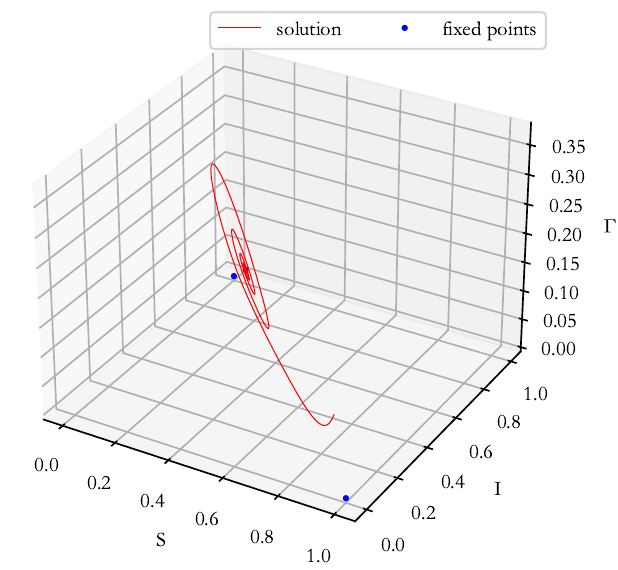}&\includegraphics[width=0.38\textwidth]{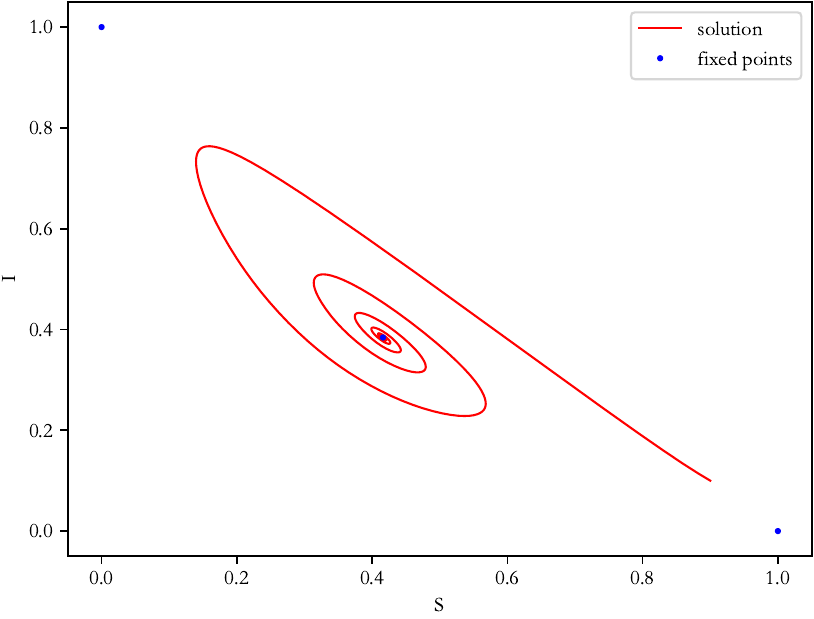}
    \end{tabular}
    \caption{SIRS model with feedback, with fixed point satisfying the first stability condition: $\alpha=0.65$, $\beta=0.5$, $\delta=0.6$, $\xi=0.4$, $S(0)=0.9$, $I(0)=\Gamma(0)=0.1$.}
    \label{fig:stab2}
\end{figure}
\begin{figure}[!htb]
    \centering
    \begin{tabular}{@{}cc@{}}
    \includegraphics[width=0.38\textwidth]{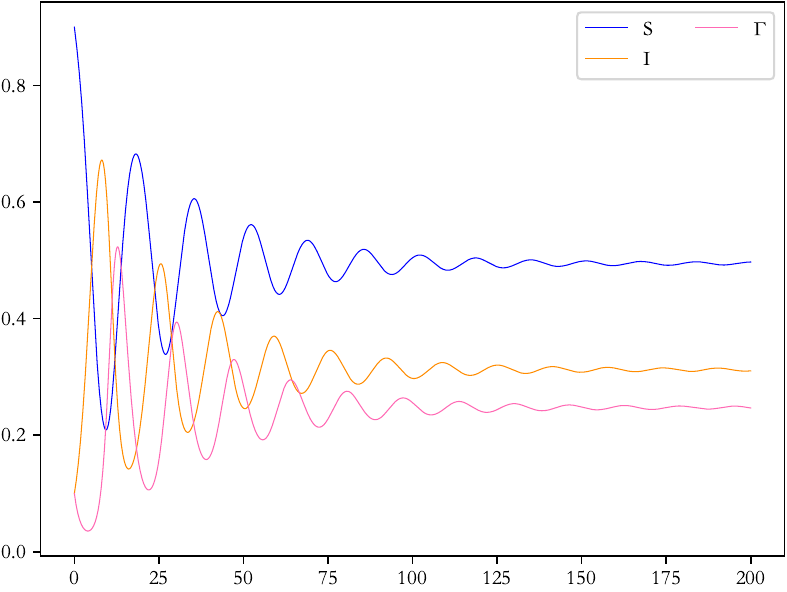}&\includegraphics[width=0.38\textwidth]{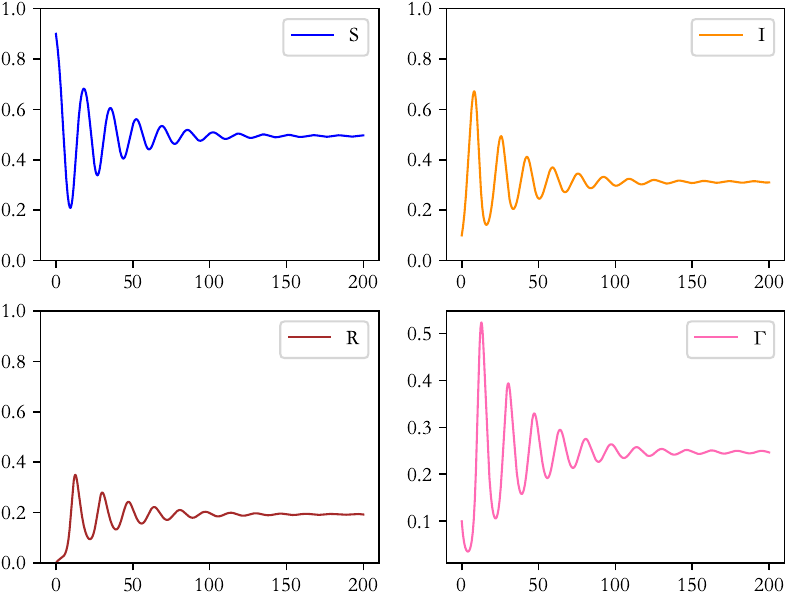}\\
    \includegraphics[width=0.38\textwidth]{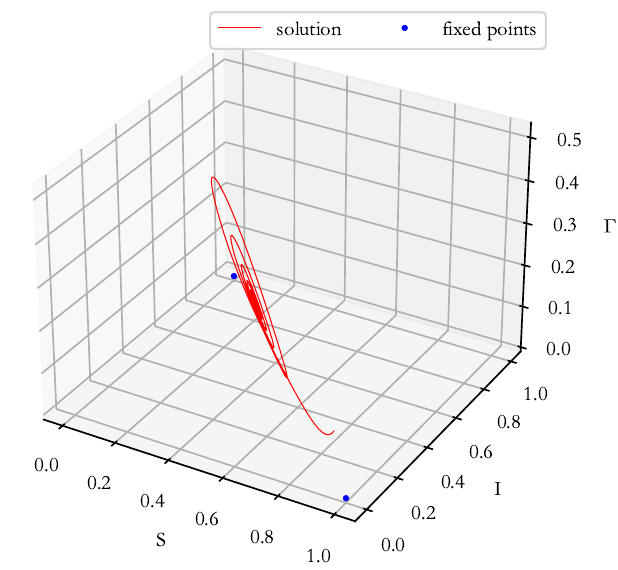}&\includegraphics[width=0.38\textwidth]{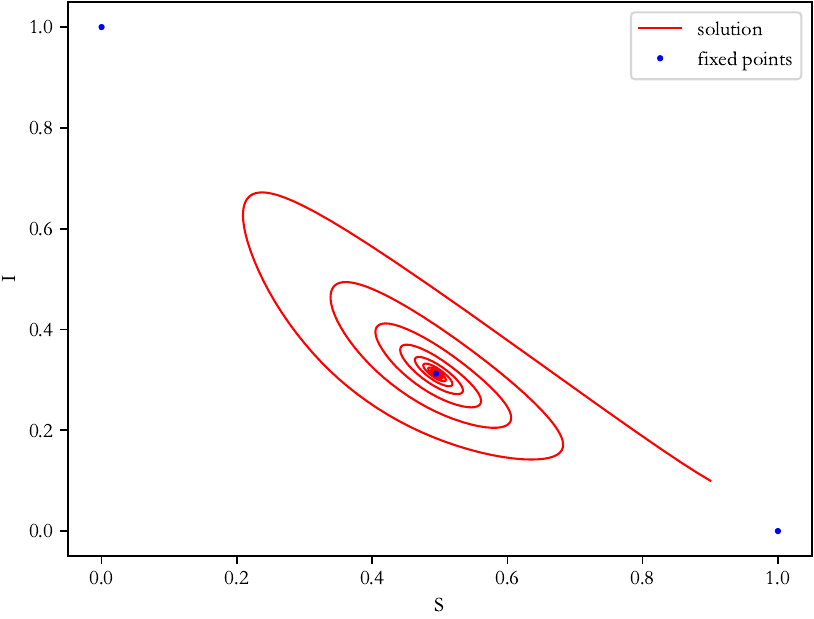}
    \end{tabular}
    \caption{SIRS model with feedback, with fixed point satisfying the second stability condition: $\alpha=0.95$, $\beta=0.5$, $\delta=0.6$, $\xi=0.4$, $S(0)=0.9$, $I(0)=\Gamma(0)=0.1$.}
    \label{fig:stab3}
\end{figure}
\begin{figure}[!htb]
    \centering
    \begin{tabular}{@{}cc@{}}
    \includegraphics[width=0.38\textwidth]{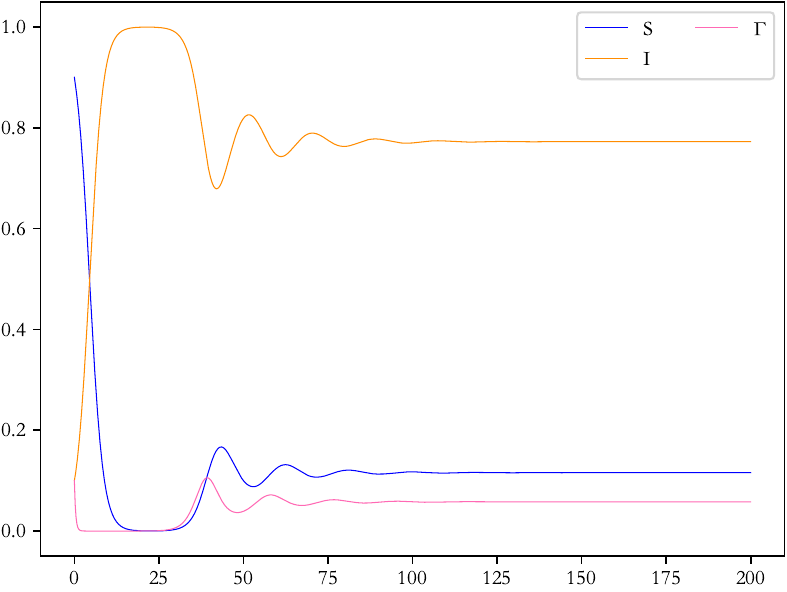}&\includegraphics[width=0.38\textwidth]{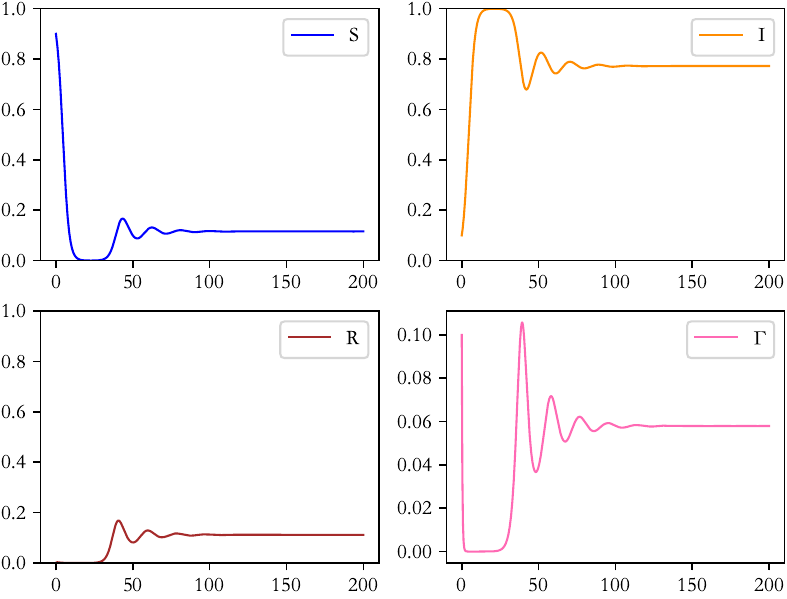}\\
    \includegraphics[width=0.38\textwidth]{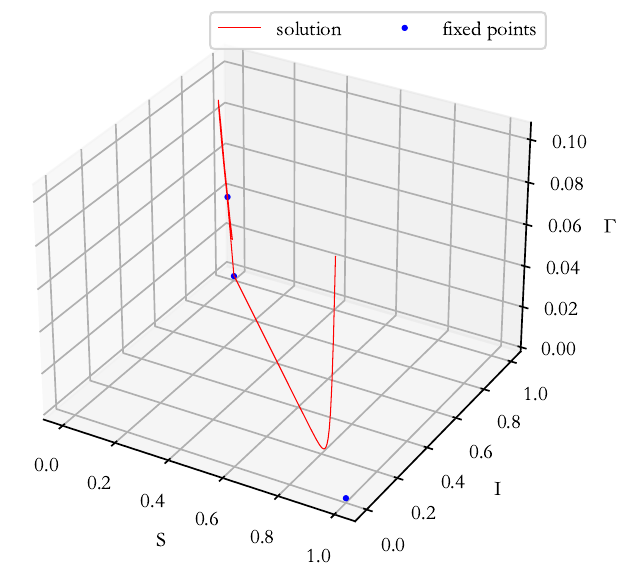}&\includegraphics[width=0.38\textwidth]{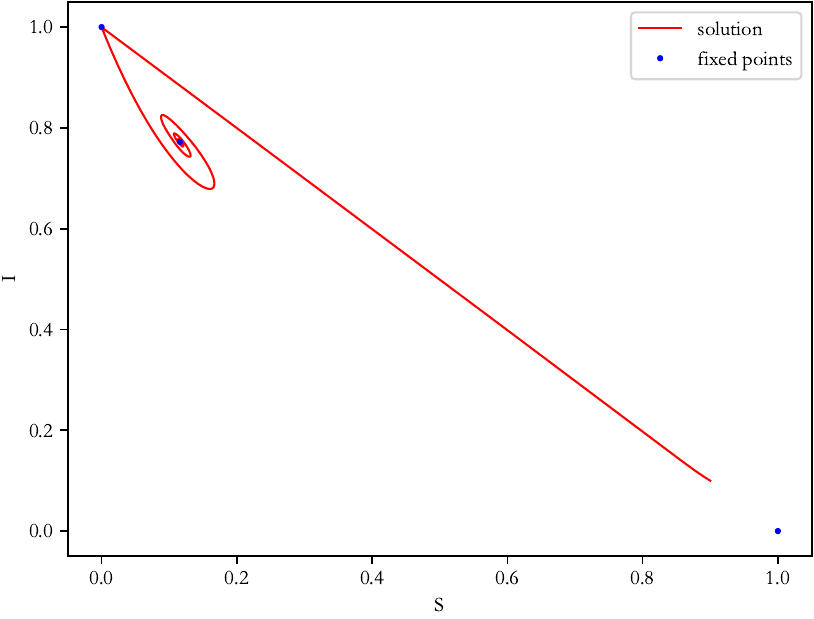}
    \end{tabular}
    \caption{SIRS model with feedback, with fixed point satisfying both stability conditions: $\alpha=0.45$, $\beta=0.5$, $\delta=3$, $\xi=0.4$, $S(0)=0.9$, $I(0)=\Gamma(0)=0.1$.}
    \label{fig:stab1}
\end{figure}
\begin{figure}[!htb]
    \centering
    \begin{tabular}{@{}cc@{}}
    \includegraphics[width=0.38\textwidth]{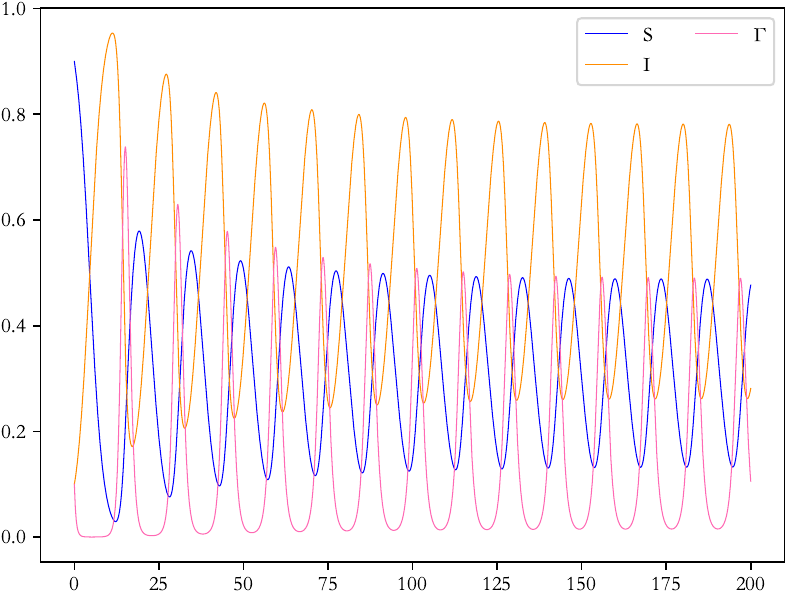}&\includegraphics[width=0.38\textwidth]{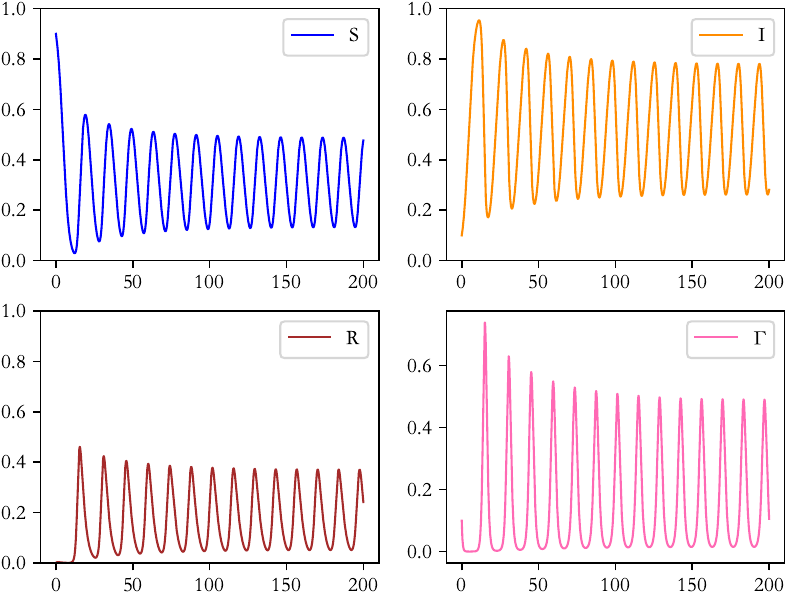}\\
    \includegraphics[width=0.38\textwidth]{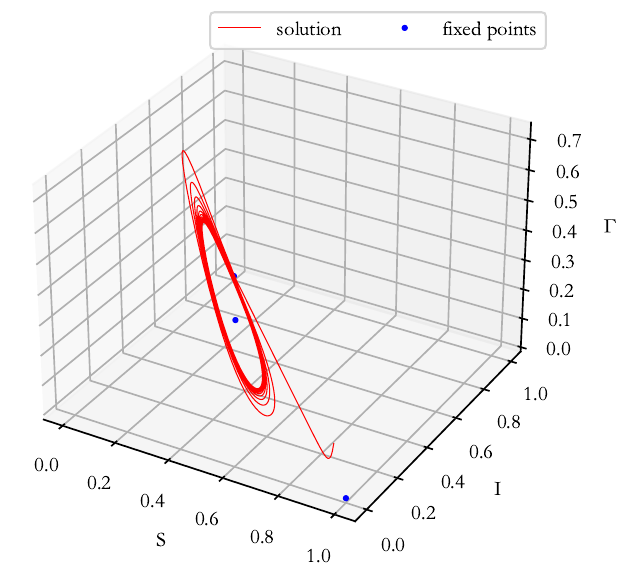}&\includegraphics[width=0.38\textwidth]{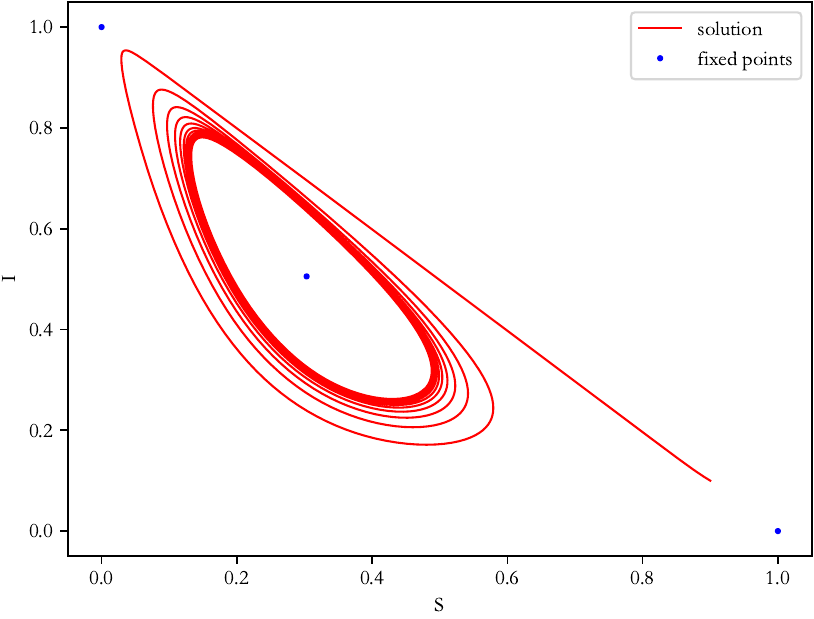}
    \end{tabular}
    \caption{SIRS model with feedback, with fixed point not satisfying the stability conditions: $\alpha=1.5$, $\beta=0.5$, $\delta=2.5$, $\xi=0.4$, $S(0)=0.9$, $I(0)=\Gamma(0)=0.1$.}
    \label{fig:nonstab}
\end{figure}
\begin{figure}[!htb]
    \centering
    \begin{tabular}{@{}cc@{}}
    \includegraphics[width=0.38\textwidth]{figs/SIRSdetHopf+0.25/sol.pdf}&\includegraphics[width=0.38\textwidth]{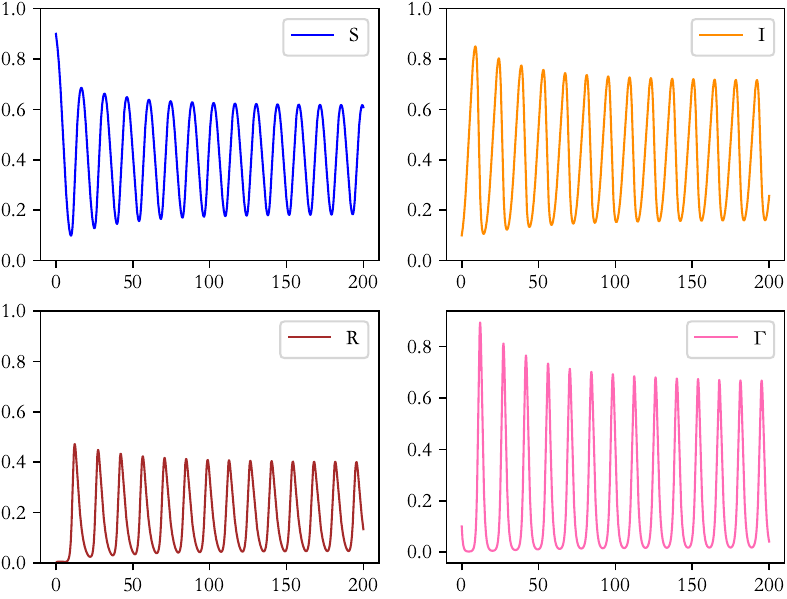}\\
    \includegraphics[width=0.38\textwidth]{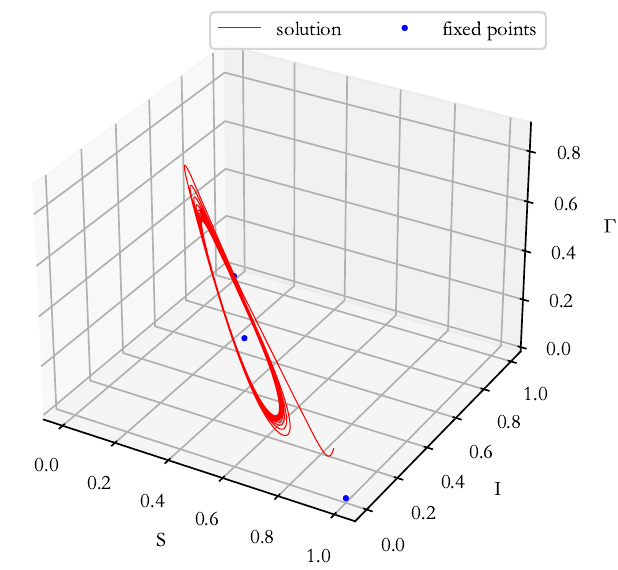}&\includegraphics[width=0.38\textwidth]{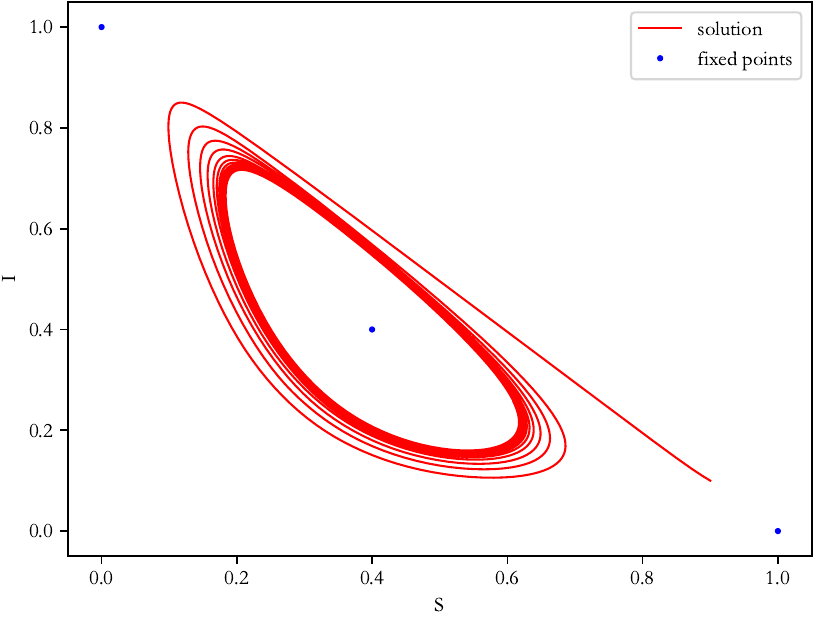}
    \end{tabular}
    \caption{SIRS model with feedback, with fixed point not satisfying the stability conditions: $\alpha=1.75$, $\beta=0.5$, $\delta=1.75$, $\xi=0.4$, $S(0)=0.9$, $I(0)=\Gamma(0)=0.1$.}
    \label{fig:nonstab2}
\end{figure}

\begin{figure}[!htb]
\centering
\begin{tabular}{@{}ccc@{}}
\includegraphics[width=0.3\textwidth]{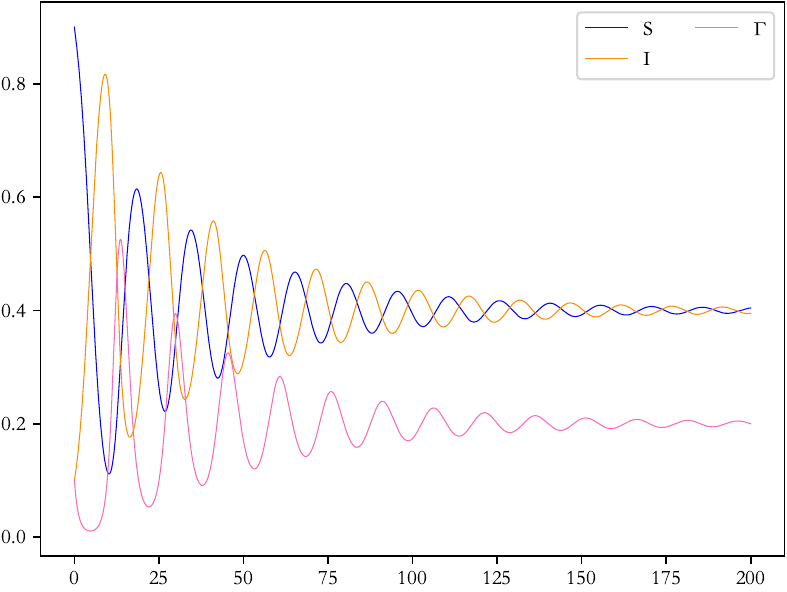}
&\includegraphics[width=0.3\textwidth]{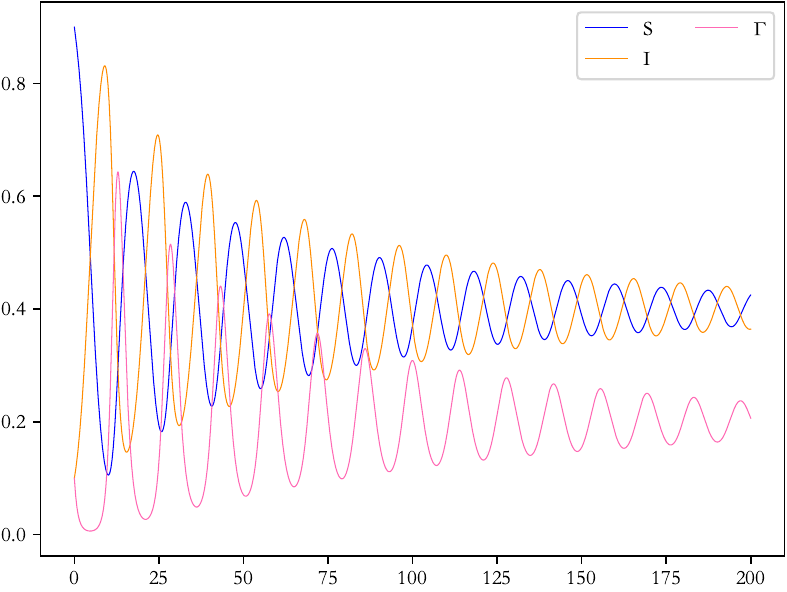}
&\includegraphics[width=0.3\textwidth]{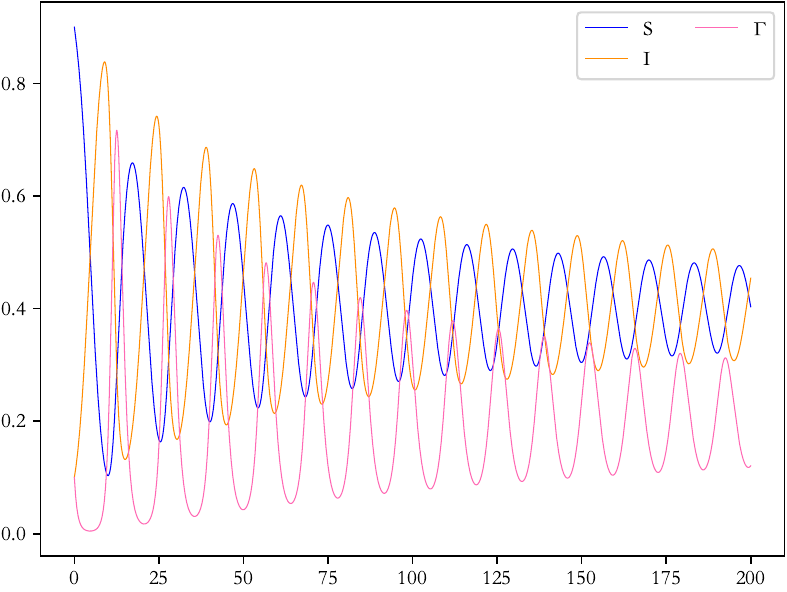}\\
$\alpha=\delta=1$ (stable)&$\alpha=\delta=1.25$ (stable)&$\alpha=\delta=1.4$ (stable)\\[3ex]
\includegraphics[width=0.3\textwidth]{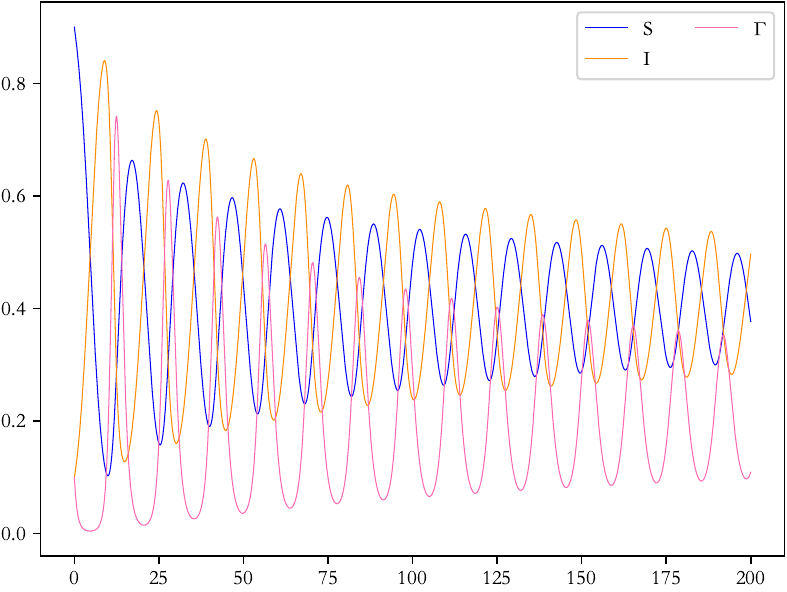}
&\includegraphics[width=0.3\textwidth]{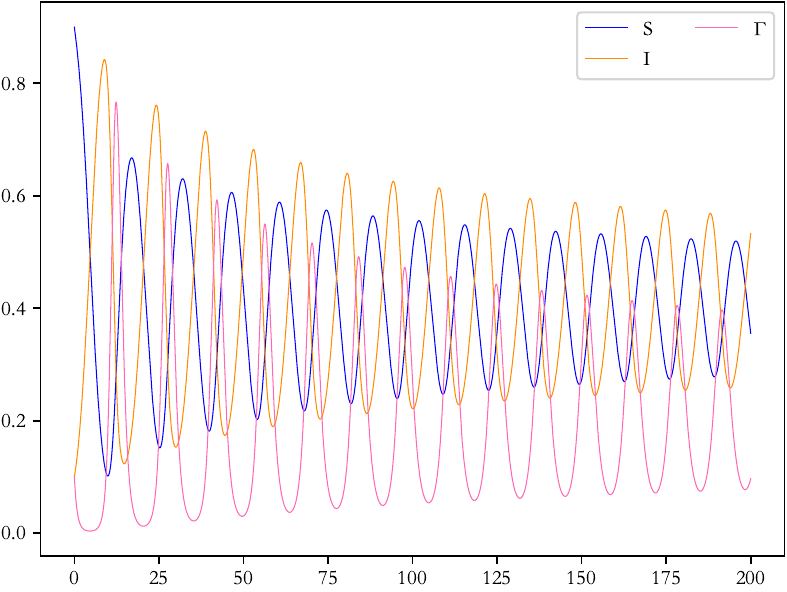}
&\includegraphics[width=0.3\textwidth]{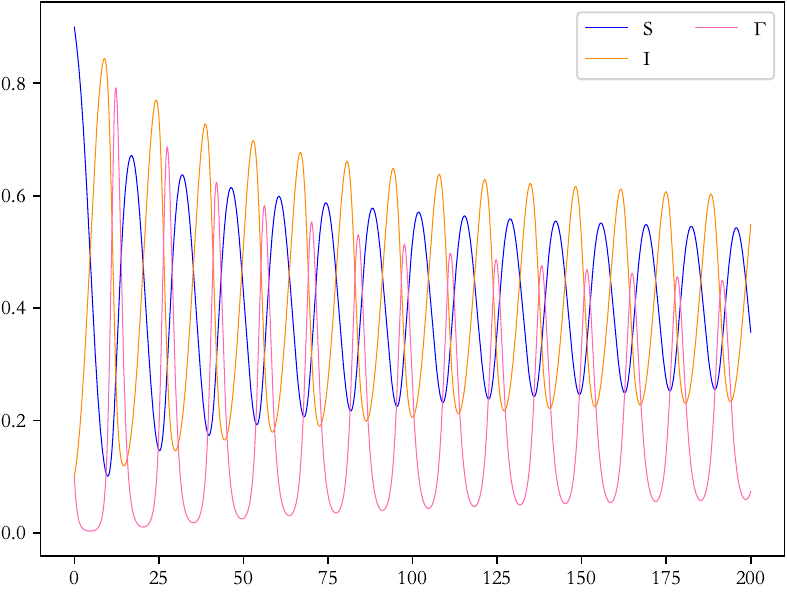}\\
$\alpha=\delta=1.45$ (stable)&$\alpha=\delta=1.5$ (Hopf bifurcation)&$\alpha=\delta=1.55$ (unstable)\\[3ex]
\includegraphics[width=0.3\textwidth]{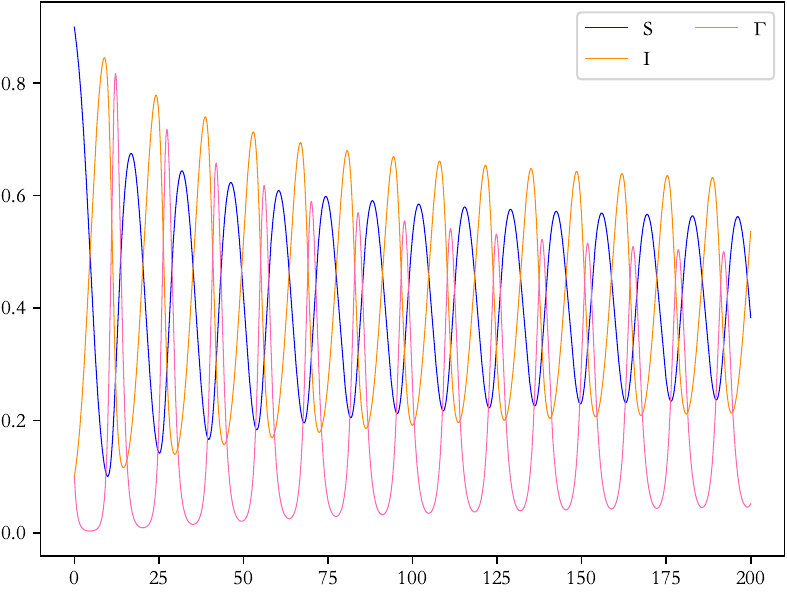}
&\includegraphics[width=0.3\textwidth]{figs/SIRSdetHopf+0.25/sol.pdf}
&\includegraphics[width=0.3\textwidth]{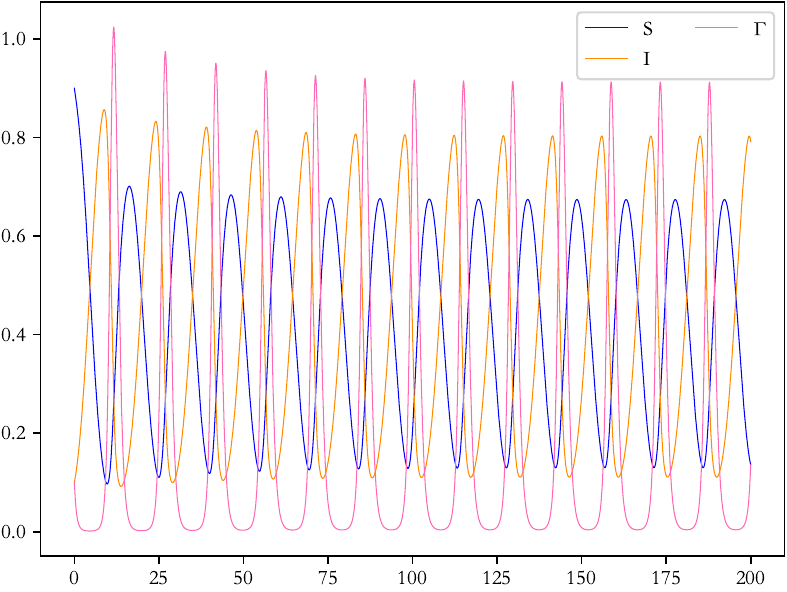}\\
$\alpha=\delta=1.6$ (unstable)&$\alpha=\delta=1.75$ (unstable)&$\alpha=\delta=2$ (unstable)
\end{tabular}
\caption{Evolution of the solution of the restricted version of the SIRS model with feedback when crossing a Hopf bifurcation point in the parameter space; $\beta=0.5$, $\xi=0.4$, $S(0)=0.9$, $I(0)=\Gamma(0)=0.1$, $\alpha=\delta$ from $1$ (upper left -- stable), to 1.5 (center -- Hopf bifurcation), to 2 (lower right -- unstable)}\label{fig:Hopf1}
\end{figure}

\begin{figure}[!htb]
\centering
\begin{tabular}{@{}ccc@{}}
\includegraphics[width=0.3\textwidth]{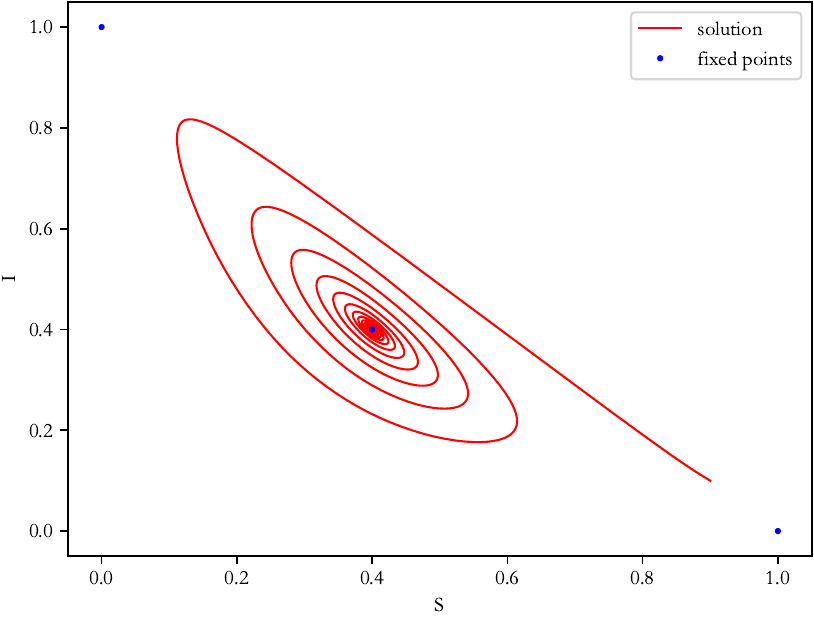}
&\includegraphics[width=0.3\textwidth]{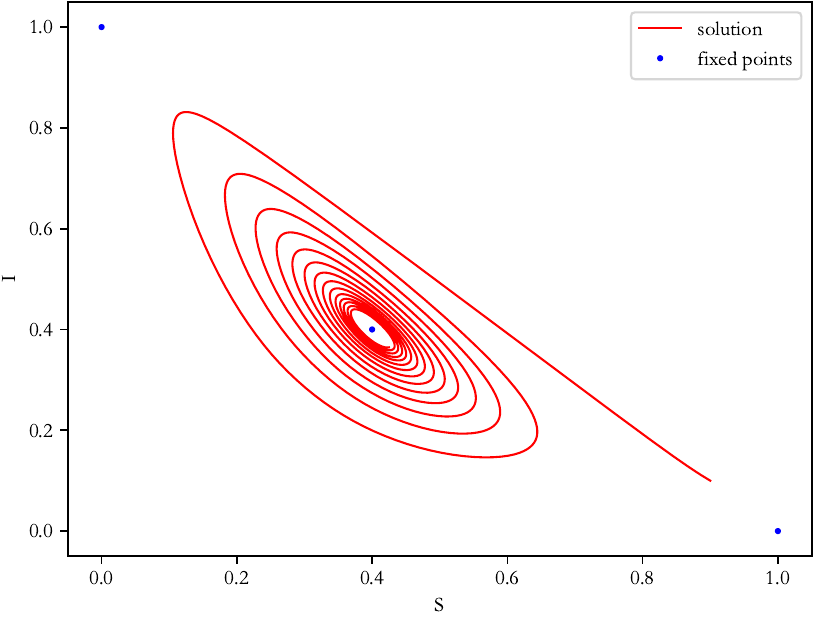}
&\includegraphics[width=0.3\textwidth]{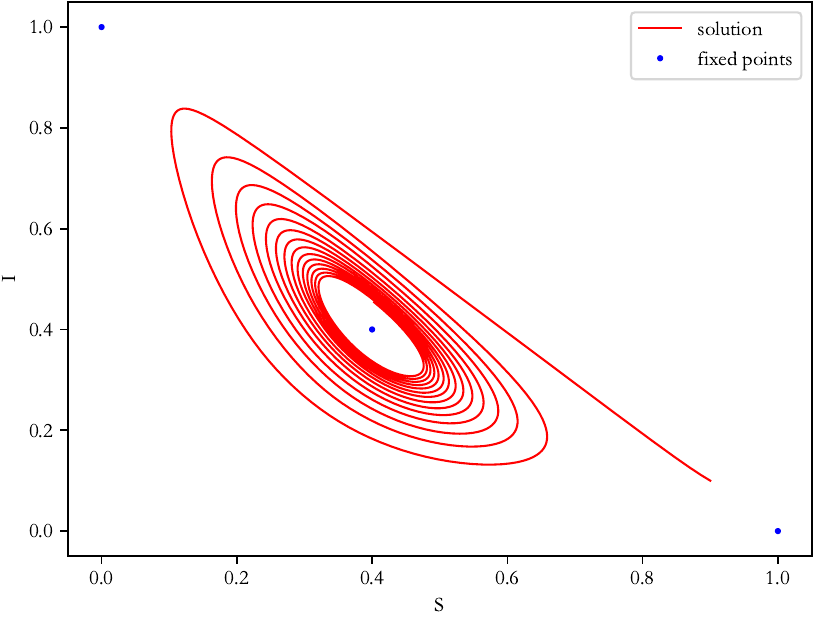}\\
$\alpha=\delta=1$ (stable)&$\alpha=\delta=1.25$ (stable)&$\alpha=\delta=1.4$ (stable)\\[3ex]
\includegraphics[width=0.3\textwidth]{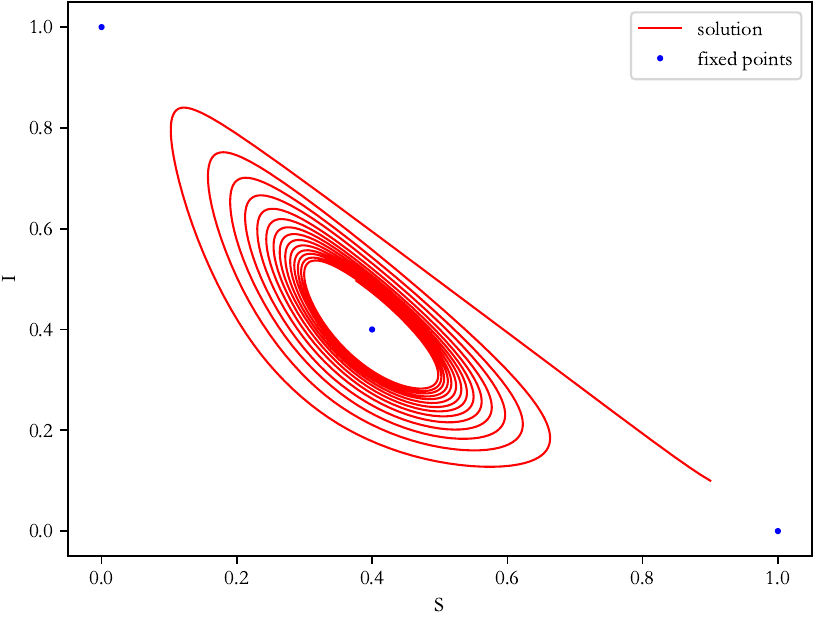}
&\includegraphics[width=0.3\textwidth]{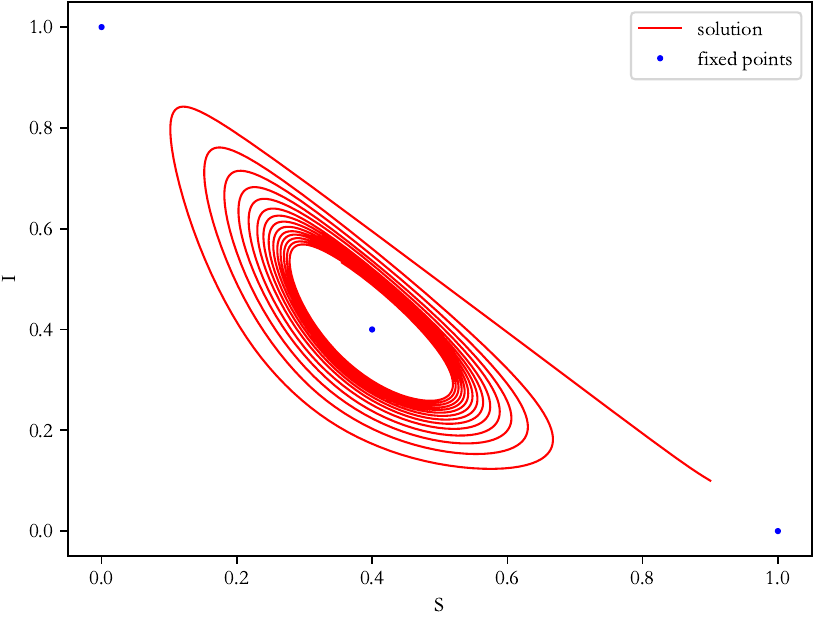}
&\includegraphics[width=0.3\textwidth]{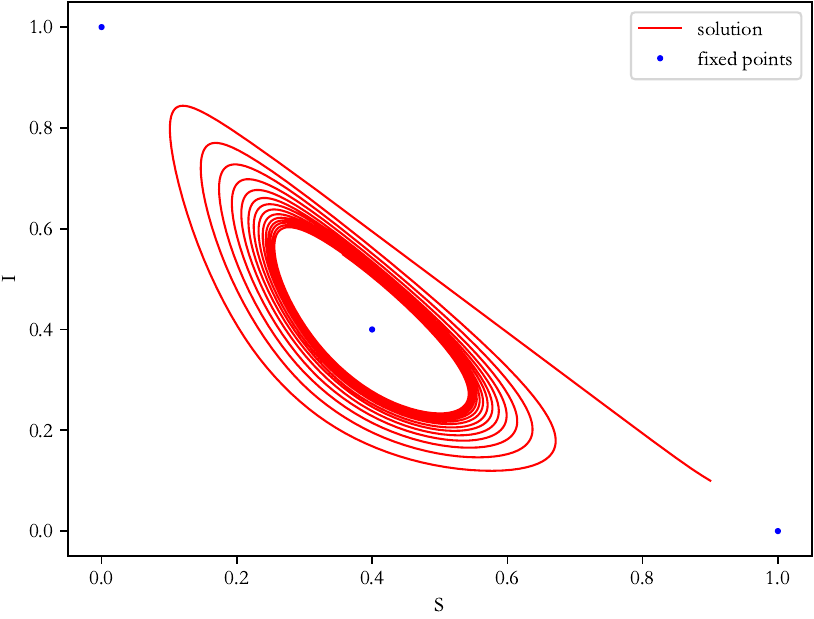}\\
$\alpha=\delta=1.45$ (stable)&$\alpha=\delta=1.5$ (Hopf bifurcation)&$\alpha=\delta=1.55$ (unstable)\\[3ex]
\includegraphics[width=0.3\textwidth]{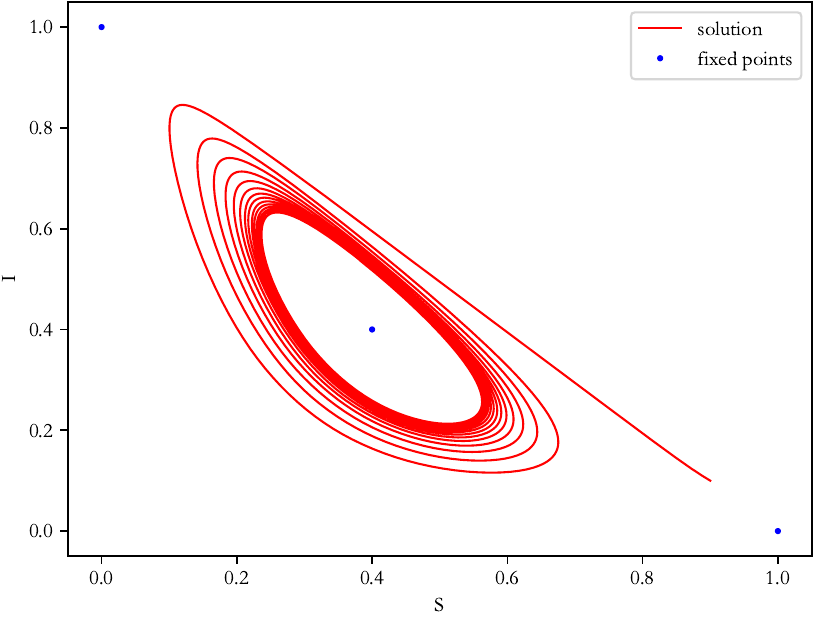}
&\includegraphics[width=0.3\textwidth]{figs/SIRSdetHopf+0.25/sol2d.pdf}
&\includegraphics[width=0.3\textwidth]{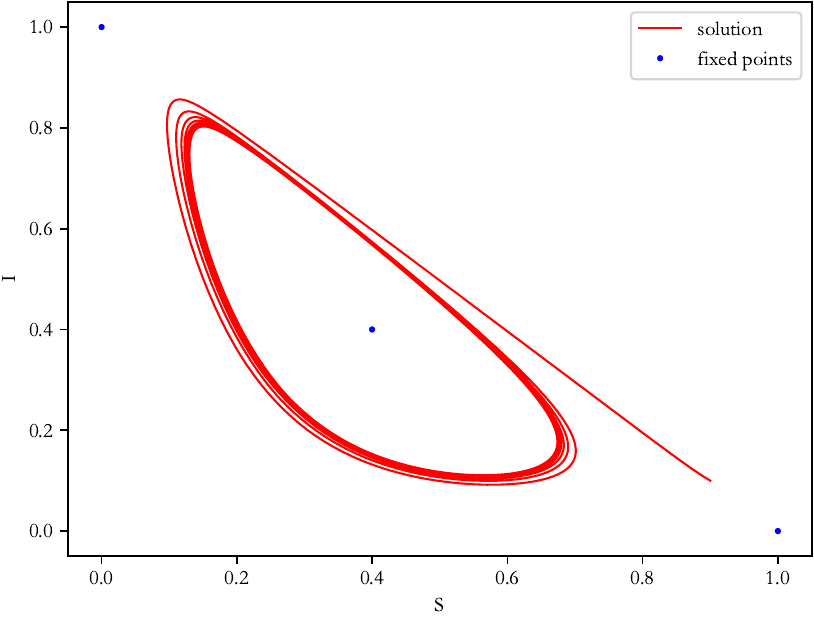}\\
$\alpha=\delta=1.6$ (unstable)&$\alpha=\delta=1.75$ (unstable)&$\alpha=\delta=2$ (unstable)
\end{tabular}
\caption{Evolution of the solution (projection on the $(S,I)$ plane) of the restricted version of the SIRS model with feedback when crossing a Hopf bifurcation point in the parameter space; $\beta=0.5$, $\xi=0.4$, $S(0)=0.9$, $I(0)=\Gamma(0)=0.1$, $\alpha=\delta$ from $1$ (upper left -- stable), to 1.5 (center -- Hopf bifurcation), to 2 (lower right -- unstable)}\label{fig:Hopf2}
\end{figure}

\clearpage

\section{Google Trends analysis}
\label{app:data}

The code used for the analysis is available on GitHub  \href{https://github.com/alessiomuscillo/modeling_waves}{at this link}.

\subsection{Seasonal decomposition}
\label{app:seasonal_decomposition}

The seasonal decomposition process uses the Python package \href{https://www.statsmodels.org/stable/generated/statsmodels.tsa.seasonal.seasonal_decompose.html}{\texttt{seasonal\_decompose}} and separates the time series additively into three components:
$$
y_t = T_t + S_t + R_t,
$$
where
\begin{itemize}
    \item   $y_t$ is the observed time series;
    \item   $T_t$ is the trend, that is, the underlying, slow-moving progression of the series over time;
    \item   $S_t$ is the seasonality, that is, the repeating short-term cycle or pattern within the data (e.g., annual or monthly seasonality);
    \item   $R_t$ is the residual, that is, the random noise or variation left over after removing the trend and seasonality.
\end{itemize}
The trend is computed as a local average over a rolling time window of length 52 weeks (52 weeks for annual seasonality). The seasonal component is computed by averaging the detrended series for each week of every year. For example, since we are analyzing weekly data with the assumption of a 52-week seasonality, the average seasonal effect is computed by grouping the same week of every year (e.g., the 4th week of every year) and averaging these de-trended data. 
The seasonal component is thus the pattern that remains after removing the trend.
The residual (or irregular) component is what is left after removing the trend and seasonality from the original time series, that is: $R_t = y_t - T_t - S_t$.
It represents random fluctuations, outliers, or noise that cannot be explained by the trend or seasonality. 
Lastly, we normalize the residuals in the range $[-1,1]$.

\subsection{Dynamic Time Warping (DTW)}

Dynamic Time Warping (DTW) is an algorithm used to measure the similarity between two time series, even if they are of different lengths or have temporal distortions (i.e., variations in speed or alignment). Unlike simple distance metrics such as Euclidean distance, which compares corresponding points in the series directly, DTW finds the optimal alignment between the two series by allowing for non-linear stretching or shrinking of the time axis. This flexibility makes DTW particularly useful for comparing time series that have similar overall patterns but are not perfectly synchronized. In our code, we used the \href{https://pypi.org/project/dtaidistance/}{\texttt{dtaidistance}} Python package which provides efficient implementations of DTW.

The DTW algorithm computes the distance by constructing a cost matrix, where each entry represents the cumulative distance between points from the two series. The optimal alignment path is then identified by minimizing the total accumulated distance along this matrix. The result is a DTW distance, where a smaller distance indicates higher similarity between the two series, and a larger distance suggests greater dissimilarity.

\subsection{Distance Computation for Residuals}
\label{app:distance_residuals}

For each word in our dataset, we computed the Dynamic Time Warping (DTW) distance between the residuals of the word and two reference processes: our endogenous oscillation model and a set of random walks.

\begin{itemize}
    \item   \textbf{Distance to our Model.} 
    To compare the residuals to our model, we first generated a family of models where the parameter $\beta$, which controls the characteristics of the oscillation, was varied over a specified range $[0.01, 0.3]$. The other parameters of the model are set as: $\xi = 0.1$ and $\alpha = \beta+\xi+\sqrt{\xi(\beta+\xi)} + 0.01$. 
    For each word, we computed the DTW distance between its residuals and the model corresponding to each $\beta$. We then selected the model with the $\beta$ that minimized this distance, ensuring the closest match between the residuals and our model of endogenous oscillations. This process allowed us to quantify how well the model captures the behavior of each word's residuals after removing the trend and seasonality components.
    \item   \textbf{Distance to Random Walks.} 
    For comparison, we simulated 500 random walks (based on normal distributions with mean equal to 0 and variance equal to 1), each of the same length as the word's residuals. For each random walk, we computed the DTW distance to the residuals and then averaged these distances to obtain a single value representing the similarity of the word's residuals to a typical random walk. This comparison provides a baseline for determining whether the residuals follow a structured, oscillatory behavior or are better explained by random fluctuations.
\end{itemize}
By comparing the DTW distances to both the model and the random walks, we can assess whether the residuals exhibit patterns that are more consistent with endogenous oscillations than with random fluctuations.

\end{document}